\documentclass[11pt]{article}

\usepackage[margin=1in]{geometry}
\usepackage[british]{babel}
\usepackage[utf8]{inputenc}
\usepackage{amsfonts,amsmath,amssymb,amsthm}
\usepackage{mathtools}
\usepackage{authblk}
\usepackage{hyperref}
\usepackage{cleveref}
\usepackage{thm-restate}
\usepackage{xcolor}
\usepackage{tikz}
\usetikzlibrary{decorations.pathreplacing}
\pdfoutput=1 

\newcommand{\odd}{\mathsf{odd}}
\newcommand{\OPT}{\mathrm{OPT}}
\renewcommand{\deg}{\mathop{\mathsf{deg}}\nolimits}

\theoremstyle{plain}
\newtheorem{theorem}{Theorem}
\newtheorem{lemma}[theorem]{Lemma}
\newtheorem{observation}[theorem]{Observation}
\theoremstyle{definition}
\newtheorem{problem}[theorem]{Problem}

\title{Improved Algorithms for Bounded-Degree (Subset) Traveling Salesman Problems\footnote{This work was partly supported by an IITP grant funded by the Korean Government (MSIT) (No. RS-2020-II201361, Artificial Intelligence Graduate School Program (Yonsei University)). This work was supported by the National Research Foundation of Korea(NRF) grant funded by the Korea government(MSIT) (RS-2025-00563707).}}

\author[1]{Jongseo Lee\thanks{\texttt{leejseo@kaist.ac.kr}}}
\author[2]{Jaehyeok Kwak\thanks{\texttt{dreami63@yonsei.ac.kr}}}
\author[2]{Hyung-Chan An\thanks{Corresponding Author. \texttt{hyung-chan.an@yonsei.ac.kr}}}
\affil[1]{School of Computing, KAIST, Daejeon, South Korea}
\affil[2]{Department of Computer Science and Engineering, Yonsei University, Seoul, South Korea}

\date{}

\begin{document}

\maketitle

\begin{abstract}
We present improved algorithms for several bounded-degree traveling salesman problems. In the bounded-degree traveling salesman path problem (BDTSPP), given a weighted graph $G=(V,E)$, two endpoints $s,t \in V$, and degree bounds $b_v$ for all $v \in V$, the goal is to find a minimum-cost subgraph of $G$ (possibly with some edges duplicated) that admits an Eulerian $s$-$t$ path and in which each vertex $v$ has degree at most $b_v$. Since deciding feasibility is already NP-hard for this problem, previous work gave a bicriteria approximation algorithm. However, that algorithm provides only a multiplicative guarantee on the degree violation, and it was left open whether a bicriteria approximation with additive degree violation is possible. We answer this open question affirmatively by giving a new bicriteria approximation algorithm with additive degree violation. The cost approximation ratio is improved as well, now matching that of Hoogeveen's analysis of the Christofides-Serdyukov algorithm.

This improvement relies on a new lemma that enables the use of a bounded-degree minimum spanning tree, rather than a bounded-degree Steiner tree, as a starting point for the algorithm. The lemma compares the cost and degrees of the tree against those of an integral optimum for the bounded-degree traveling salesman problem at hand, rather than those of a fractional optimum. Our lemma brings improvement to the circuit version (BDTSP) as well: we give a bicriteria algorithm that matches the previous cost approximation ratio while reducing the additive degree violation to $+2$, which is best possible.

Subset TSP is a generalization of the standard ``all-vertices'' TSP, in which only a specified subset of vertices is required to be visited. We present improvements for both the circuit and the path versions. For the subset path problem (BDSTSPP), we present the first bicriteria approximation algorithm with additive degree violation; for the subset circuit problem (BDSTSP), we give an improved cost approximation ratio.
\end{abstract}

\section{Introduction}

The traveling salesman problem (TSP) is one of the most fundamental problems in combinatorial optimization. For a long time, the best approximation ratios known for TSP were those achieved by the Christofides-Serdyukov algorithm~\cite{christofides,serdyukov1978nekotorykh}---$3/2$ for the circuit version and $5/3$ for the path version~\cite{hoogeveen1991}---and these bounds were improved only relatively recently to $(3/2 - 10^{-36})$~\cite{1.5-eps-random, traub2020reducing, karlin2023deterministic}. Despite this central importance, however, TSP remained largely unexplored in the bounded-degree setting for a long time, unlike other problems including minimum spanning tree, Steiner tree, or vertex/edge connectivity problems. Bounded-degree optimization is the natural problem of imposing degree constraints on solutions to network design problems. For TSP, this direction was initiated only recently by Friggstad and Mousavi~\cite{bdtsp}, who motivated the problem through an example of planning travel in a road network where certain locations should be traversed only a limited number of times, due to their disruptive nature or difficulty. We note that bounded-degree problems more generally arise in applications such as wireless sensor networks~\cite{amitabha2010bdmrst,an2025bdmrst}.

Since even deciding feasibility is already NP-hard for bounded-degree TSP, previous work has focused on bicriteria approximation algorithms. For the circuit version, Friggstad and Mousavi~\cite{bdtsp} gave a $(3/2,+4)$-approximation algorithm, i.e., an algorithm that returns a solution whose cost is at most $3/2$ times the optimum while allowing an additive degree violation of at most 4 at every vertex. This approximation ratio of $3/2$ matches that of the Christofides-Serdyukov algorithm and is therefore almost the best possible unless the approximation ratio for TSP without degree bounds itself is improved. In contrast, the additive degree violation of $+4$ still leaves room for improvement, and Friggstad and Mousavi~\cite{bdtsp} noted that it would be interesting to obtain an additive violation of $+2$. Progress on the path version has been much more limited. In this setting, Mousavi~\cite{mousavi-thesis} gave a bicriteria approximation algorithm whose degree guarantee is multiplicative: it outputs a solution of cost at most $8/3$ times the optimum and degree at most $\frac{5}{3}b_v + 4$ at each vertex $v$, where $b_v$ denotes the input degree bound of $v$. In the same work, Mousavi posed as an open question whether the problem admits a bicriteria approximation algorithm with an additive degree violation.

In this paper, we present improved algorithms for both the circuit and path problems. In both cases, our approximation ratios match those of the Christofides-Serdyukov algorithm. For the circuit problem, our algorithm achieves an additive degree violation of $+2$, which is best possible\footnote{Allowing $+1$ is no different from allowing $+0$ due to parity constraints.} since deciding feasibility is already NP-hard. For the path problem, we give the first algorithm whose degree violation is additive rather than multiplicative; the additive violation is $+4$.

Beyond these standard ``all-vertices'' problems, 
we study their subset generalizations.
In these problems, the input specifies a set of required vertices that must be visited, rather than requiring a visit to every vertex. 
Friggstad and Mousavi~\cite{bdtsp} introduced the bounded-degree subset traveling salesman problems
through the above example of planning travel in a road network, in which only a given subset of locations is now required to be visited.
Moreover, aside from such practical motivation, these problems may be of theoretical interest as they lie at the intersection of three major topics in combinatorial optimization: bounded-degree optimization, Steiner problems, and traveling salesman problems.
Friggstad and Mousavi gave $(5/3,+4)$-, $(13/8,+6)$-, and $(3/2,+8)$-approximation algorithms for the circuit version of this problem. As in the case of all-vertices problems, only a multiplicative degree guarantee was previously given for the path version: Mousavi~\cite{mousavi-thesis}'s algorithm outputs a solution of cost at most $8/3$ times the optimum and degree at most $\frac{5}{3}b_v + 4$ at each vertex $v$.

We give improvements for both problems in this paper. For the bounded-degree subset traveling salesman (circuit) problem, we present a $(14/9,+6)$-approximation algorithm; for the bounded-degree subset traveling salesman path problem, we give $(11/5,+6)$- and $(2,+8)$-approximation algorithms, achieving the first additive degree violation. 

\mbox{ }

A key ingredient in our results for the standard (all-vertices) problems is a simple new lemma that associates an optimal solution for bounded-degree TSP with a bounded-degree spanning tree. Most approximation algorithms for TSP ensure connectivity by incorporating a tree into the solution (and the degree parity is later corrected by adding a $T$-join). In previous works on bounded-degree TSP, this tree was obtained by computing bounded-degree Steiner trees rather than bounded-degree spanning trees, partly because the natural linear programming (LP) relaxation for bounded-degree TSP is formulated in terms of cut constraints, which are too weak as a spanning tree formulation. By comparing against a true integral optimal solution rather than an LP optimum, our lemma shows the existence of a bounded-degree spanning tree of low cost, as follows.

\begin{lemma}[Informal]\label{lem:informal-tree}
Given an instance of bounded-degree TSP with degree bounds $b_v$, let $\OPT$ denote its optimal cost. Then there exists a spanning tree of cost at most $\OPT$ in which every vertex $v$ has degree at most $\lfloor b_v/2\rfloor + 1$.
\end{lemma}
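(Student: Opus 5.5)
Let $H$ be an optimal solution: a connected spanning multigraph on $V$ of cost $\OPT$ with all degrees even and $\deg_H(v)\le b_v$. The plan is to extract from $H$ a spanning tree whose cost is at most the cost of $H$ and in which every vertex has degree at most $\deg_H(v)/2+1$. Since $\deg_H(v)$ is even and at most $b_v$, we have $\deg_H(v)/2\le\lfloor b_v/2\rfloor$, so this suffices. Because the tree uses only edges of $H$, without duplicates, its cost is at most $c(H)=\OPT$ (with nonnegative costs).

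The natural route is through an Eulerian circuit $C$ of $H$. Walk along $C$ starting at an arbitrary vertex $r$. Each time the walk enters a vertex $v\neq r$ for the first time, keep the edge of $C$ used to enter it. This gives exactly $|V|-1$ edges, each vertex other than $r$ receives exactly one parent edge, and following parent edges always leads back to earlier-discovered vertices. Hence these edges form a spanning tree $T\subseteq H$ (the ``first-arrival'' tree of the Euler tour).

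The degree bound is the key step. A vertex $v\neq r$ is visited by $C$ exactly $\deg_H(v)/2$ times, and each visit consists of one entering edge and one leaving edge. In $T$, $v$ has one edge to its parent, namely the entering edge of its first visit. Its remaining tree edges are edges by which some child $u$ is first entered directly from $v$. Each such edge is the leaving edge of a distinct visit to $v$, because a visit has only one leaving edge. So $v$ has at most $\deg_H(v)/2$ children, and $\deg_T(v)\le \deg_H(v)/2+1$.

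The root needs separate care. It has no parent edge, but its first departure and the final return both count as visits, so $r$ has $\deg_H(r)/2$ leaving edges and hence at most $\deg_H(r)/2$ children. This gives $\deg_T(r)\le\deg_H(r)/2$, which is even better than needed.

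I expect the main obstacle to be pinning down exactly this counting of visits versus leaving edges, including at the root, and handling the degenerate cases. If $|V|=1$, the tree is empty. If $H$ contains parallel edges or loops, the argument uses only the fact that each visit contributes at most one leaving edge to a newly discovered vertex; loops never enter a new vertex and so are never chosen.
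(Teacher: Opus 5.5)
Your proof is correct and is essentially the paper's argument: build the first-arrival tree along an Eulerian walk of an optimal solution, then bound each vertex's children by its number of departures. The paper phrases this for an Eulerian $s$-$t$ walk rooted at $s$, reducing the circuit case to it by deleting one edge, which also gives the path version (BDTSPP) with bound $\lfloor \deg_H(v)/2\rfloor+1$; you work directly with the Eulerian circuit.
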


Using this lemma, we obtain the following improved performance guarantees.

\begin{theorem}\label{thm:bdtsp}
There exists a $(3/2,+2)$-approximation algorithm for the bounded-degree traveling salesman problem.
\end{theorem}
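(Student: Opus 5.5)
The plan is a Christofides–Serdyukov-style algorithm. We start from a bounded-degree minimum spanning tree, whose existence is guaranteed by Lemma~\ref{lem:informal-tree}, and then repair degree parities with a degree-bounded $T$-join.

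\textbf{Step 1 (tree).} Solve the bounded-degree MST problem with bounds $\lfloor b_v/2\rfloor+1$. We use the Singh--Lau algorithm, which returns a tree of cost at most the LP optimum with degree at most bound $+1$. This gives a spanning tree $T_0$ with $c(T_0)\le \OPT$ and $\deg_{T_0}(v)\le \lfloor b_v/2\rfloor+2$. That is one unit too much, so I would instead use the variant that only violates the bound when it does not hurt parity, or else argue directly with the LP. Concretely, the LP relaxation of degree-bounded spanning trees with bounds $\lfloor b_v/2\rfloor+1$ is feasible with value $\le\OPT$ by the lemma. Iterative rounding with the $+1$ violation then gives $\deg_{T_0}(v)\le\lfloor b_v/2\rfloor+2$.

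\textbf{Step 2 (parity correction).} Let $O$ be the set of odd-degree vertices of $T_0$. We need an $O$-join $J$ of cost at most $\OPT/2$ whose degrees, combined with those of $T_0$, stay within $b_v+2$. I would solve the LP for a degree-bounded $O$-join with bounds $d_v := b_v+2-\deg_{T_0}(v)$ (parity-adjusted) and round it with an integral polytope argument. The fractional point is $x=\frac12\chi^{\mathrm{OPT}}$. Since the optimal tour is Eulerian and $2$-edge-connected, every $O$-odd cut is crossed by at least $2$ edges of $\OPT$, so $x$ lies in the dominant of the $O$-join polytope and has cost $\OPT/2$. Its degrees are $\le b_v/2$. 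The key accounting is
\[
\deg_{T_0}(v)+\tfrac{b_v}{2}\le \lfloor b_v/2\rfloor+2+\tfrac{b_v}{2}\le b_v+2 .
\]
The hard part is rounding $x$ to an integral join without losing degree. Here I would use the fact that the $O$-join polytope intersected with degree constraints is not integral in general. To handle this, I would take a minimum-cost $O$-join in the graph where each vertex's capacity is $d_v$, formulated as a $b$-matching/$T$-join with capacities. Such a capacitated $T$-join (a generalized matching problem) is polynomially solvable with an integral polytope described by odd-set constraints. The half-tour is then a feasible fractional point of that polytope, provided the capacity odd-set inequalities are implied by the cut argument. I expect verifying this to be the main obstacle.

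\textbf{Step 3.} Output the multigraph $T_0+J$. It is connected and has all degrees even, so it is Eulerian. Its cost is at most $\tfrac32\OPT$, and each degree is at most $\deg_{T_0}(v)+\deg_J(v)\le b_v+2$. If $\deg_{T_0}(v)=\lfloor b_v/2\rfloor+2$ with $v$ odd in $T_0$, then parity also forces $\deg_J(v)\ge1$, but the rounded bound $\lfloor b_v/2\rfloor$ on $J$ suffices. Since the final degree is even and at most $b_v+2$, parity gives no further loss.
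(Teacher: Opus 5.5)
Your overall plan matches the paper's: a spanning tree of cost at most $\OPT$ and degree at most $b_v/2+2$, obtained from Lemma~\ref{lem:informal-tree} plus Singh--Lau, followed by a degree-bounded $T$-join whose cost is charged to half of a tour. You use $\frac12\chi_{H^*}$ for an optimal integral tour $H^*$, whereas the paper uses $\frac12 x^*$ for an LP optimum; either works. The gap is in Step~2. You never show that the degree-bounded $O$-join problem has an integral solution of cost at most $c(x)$ and degree at most $d_v$. Instead you leave as an unverified ``main obstacle'' whether $x$ satisfies the extra odd-set inequalities of a capacitated $T$-join polytope. Without that step nothing bounds $c(J)$ or $\deg_J(v)$, so the argument does not go through as written.

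The missing idea is parity. Because $b_v$ is even, your residual bound $d_v=b_v+2-\deg_{T_0}(v)$ is odd exactly when $v\in O$, so no parity adjustment is needed. Under exactly this condition, the LP~\eqref{eq:bd-t-join}, consisting only of the degree constraints and $x(\delta(S))\ge 1$ for $O$-odd $S$, is integral (Lemma~\ref{lem:bdtjoin_integrality}). No further odd-set inequalities have to be checked:
\begin{itemize}
\item $x=\frac12\chi_{H^*}$ satisfies $x(\delta(S))\ge 1$ for every cut, since each cut meets $H^*$ in an even, positive number of edges;
\item $x(\delta(v))\le b_v/2\le d_v$, because $\deg_{T_0}(v)\le b_v/2+2$.
\end{itemize}
Hence the minimum-cost $O$-join with bounds $d_v$ costs at most $\OPT/2$, and $\deg_{T_0}(v)+\deg_J(v)\le b_v+2$ holds directly.

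The paper instead uses bounds $b'_v\le b_v/2+1$, obtains $b_v+3$, and rounds down by parity; your residual bound avoids that last step. Your worry that degree-constrained $T$-join polytopes need additional odd-set inequalities is valid in general, but not when the bounds have this parity. For the same reason, the hedges in your Steps~1 and~3 are unnecessary.
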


\begin{theorem}\label{thm:bdpathtsp}
There exists a $(5/3,+4)$-approximation algorithm for the bounded-degree traveling salesman path problem.
\end{theorem}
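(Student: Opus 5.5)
The plan is to follow the Christofides–Serdyukov/Hoogeveen template: build a spanning tree, then add a parity-correcting join. The degree bounds are enforced on the tree through (a path version of) Lemma~\ref{lem:informal-tree}, and on the join through a degree-constrained join computation.

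\textbf{Step 1 (tree).} I would first prove the path analogue of Lemma~\ref{lem:informal-tree}. Let $H$ be an optimal solution, a multigraph admitting an Eulerian $s$-$t$ path with $\deg_H(v)\le b_v$. Following that Eulerian path and keeping only the edge of first arrival at each vertex gives a spanning tree (the ``first-visit'' tree) of cost at most $\OPT$. In this tree each vertex $v$ has one entering edge, plus at most one child per departure from $v$. The number of departures is at most $\lfloor b_v/2\rfloor+1$ for $v=s$, and at most $\lfloor b_v/2\rfloor$ for interior vertices. So every degree is at most $\lfloor b_v/2\rfloor+1$, possibly $+1$ more at $s$, which I would absorb by a careful count of departures at $s$. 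Since bounded-degree MST admits $+1$ additive algorithms (Singh--Lau), we can compute a spanning tree $T$ with $c(T)\le\OPT$ and $\deg_T(v)\le\lfloor b_v/2\rfloor+2$.

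\textbf{Step 2 (join, cost).} Let $U=\odd(T)\,\triangle\,\{s,t\}$. I would show that the $U$-join dominant contains a fractional point $y$ with $c(y)\le \frac23\OPT$. The construction uses Hoogeveen's convex combination of three candidate joins:
\begin{itemize}
\item the unique $U$-join $J_T\subseteq T$;
\item the optimal path's edge multiset $H$ itself, which is an $\{s,t\}$-join;
\item $H$ combined with the tree.
\end{itemize}
Concretely, I would take $y=\frac13\chi^{H}+\frac13\chi^{T}$ restricted appropriately. The check is that for every $U$-odd cut $\delta(S)$, either $H$ crosses it at least twice or $T$ does, so that $y(\delta(S))\ge 1$. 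This is exactly Hoogeveen's parity case analysis on whether $S$ separates $s$ from $t$. It gives $c(y)\le\frac23\OPT$, hence a total cost of $\frac53\OPT$.

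\textbf{Step 3 (join, degrees).} The same point satisfies
\[
y(\delta(v))\le \tfrac13\deg_H(v)+\tfrac13\deg_T(v)\le \tfrac{b_v}{3}+\tfrac13\bigl(\lfloor b_v/2\rfloor+2\bigr)\le \tfrac{b_v}{2}+1.
\]
I would then solve the LP (minimise $c(x)$ over the $U$-join dominant with $x(\delta(v))\le \lceil b_v/2\rceil+1$) and round it by iterative relaxation, dropping a degree constraint once few fractional edges remain at $v$. The target is an integral $U$-join $J$ with $c(J)\le\frac23\OPT$ and $\deg_J(v)\le \lceil b_v/2\rceil+O(1)$. Adding $J$ to $T$ gives a connected multigraph whose odd vertices are exactly $s,t$, i.e.\ one admitting an Eulerian $s$-$t$ path. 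Its degrees are $\lfloor b_v/2\rfloor+\lceil b_v/2\rceil+\text{const}=b_v+\text{const}$.

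\textbf{Main obstacle.} The delicate part is getting the total additive violation down to exactly $+4$. This requires sharpening Step~1 to $+1$ over $\lfloor b_v/2\rfloor+1$ and the join rounding to about $+1$. It also requires exploiting parity, since $\deg_{T\cup J}(v)$ is even for $v\ne s,t$, which lets a $+5$ be rounded down to $+4$. I expect the join rounding to be hardest: $T$-join polytopes with degree constraints are not integral. My first attempt would be to exploit that the extreme points of the cut-covering LP are half-integral-like on laminar tight families. Failing that, I would use the Friggstad--Mousavi degree-bounded $T$-join rounding and redo the bookkeeping.
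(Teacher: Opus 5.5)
Your Steps 1 and 2 essentially match the paper. You use the first-visit tree of an Eulerian $s$-$t$ walk, then Singh--Lau with $B_v=\lfloor b_v/2\rfloor+1$, then the point $y=\frac13\chi_H+\frac13\chi_T$. The paper uses the LP optimum $x^*$ in place of $\chi_H$. This makes no difference for an existence bound, since $\chi_H$ is itself feasible for \eqref{eq:bd-path-tsp}. Two small corrections:
\begin{itemize}
\item $s$ is the root of the first-visit tree, so it has no parent edge. Its tree degree is at most its number of departures, $\lceil \deg_H(s)/2\rceil=\lfloor \deg_H(s)/2\rfloor+1$. There is no extra $+1$ to absorb, and Step~1 needs no sharpening.
\item The cut check needs both facts: each of $H$ and $T$ crosses $\delta(S)$ at least once, and one of them crosses at least twice. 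Your ``three candidate joins'' description does not make sense literally, since $H$ is an $\{s,t\}$-join, not a $U$-join. The concrete $y$, however, is right.
\end{itemize}

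The genuine gap is Step~3. You leave the join's degree guarantee to an unspecified iterative-relaxation rounding with ``$+O(1)$'' violation, so the proposal never establishes $+4$. Your premise that degree-constrained $T$-join polytopes are not integral is exactly the point the paper sidesteps. The missing idea is parity matching: if the degree bounds $b'_v$ satisfy ``$b'_v$ odd iff $v\in U$'', then \eqref{eq:bd-t-join} is an integral LP (Lemma~\ref{lem:bdtjoin_integrality}; Schrijver Thm.~36.8, Friggstad--Mousavi Thm.~8). In that case there is no rounding loss at all.

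Concretely, let $b'_v$ be the smallest integer $\ge b_v/2+2/3$ with parity $\mathbf{1}_{v\in U}$. Your $y$ satisfies $y(\delta(v))\le \frac13(b_v+\lfloor b_v/2\rfloor+2)\le b_v/2+2/3\le b'_v$. Hence a minimum-cost bounded-degree $U$-join $J$ has $c(J)\le c(y)\le\frac23\OPT$ and $\deg_J(v)\le b'_v<b_v/2+8/3$. It follows that $\deg_{T\uplus J}(v)<\lfloor b_v/2\rfloor+2+b_v/2+8/3\le b_v+14/3$, so $\deg_{T\uplus J}(v)\le b_v+4$. Even your looser bound $y(\delta(v))\le b_v/2+1$ would give $+4$ once the bounds are parity-matched. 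Without this integrality fact, your Step~3 yields only an unquantified additive violation rather than the claimed $+4$.
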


On the other hand, it is much more natural for the subset problems to use Steiner trees rather than spanning trees as the initial connected graph. For example, Friggstad and Mousavi~\cite{bdtsp} used a modified version of Lau and Singh's bounded-degree Steiner tree algorithm~\cite{bdstp} that imposes additional parity constraints on vertex degrees. While this parity-constrained solution is a 2-approximation solution to the problem, it may contain edges that are redundant for the purpose of terminal connectivity; as such, the algorithm uses its inclusion-wise minimal subset as the initial connected graph instead. Since we do not have any good bound on the cost of the redundant edges (which are absent from the minimal subset), the analysis just pays for the entire parity-constrained solution. Our analysis also pays for the entire parity-constrained solution, but by making heavier use of the redundant edges when we bound the cost of the $T$-join, we make sure these payments are not wasted, leading to the following improvements in the cost approximation ratios.

\begin{theorem}\label{thm:bd-subset-14-9-6}
There exists a $(14/9,+6)$-approximation algorithm for the bounded-degree subset traveling salesman problem.
\end{theorem}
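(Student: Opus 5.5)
We plan to follow the Friggstad--Mousavi framework for the subset problem and change only how the $T$-join is paid for. Let $R$ be the set of required vertices. We take an optimal fractional solution $x^*$ of the natural cut LP, which has constraints $x(\delta(S))\ge 2$ for every $S$ separating required vertices and $x(\delta(v))\le b_v$, with cost $\mathrm{LP}\le\OPT$.

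\textbf{Step 1 (parity-constrained tree).} We run the modified Lau--Singh iterative relaxation on $x^*/2$, which is a feasible fractional Steiner-tree point with degree bounds $b_v/2$. This produces a forest-like subgraph $F$ connecting $R$, with $c(F)\le 2\cdot c(x^*/2)\cdot(\text{ratio})$. Concretely, we use the version giving $c(F)\le c(x^*)$ at the cost of an additive degree violation of a few units, and with the parity constraints chosen so that the odd-degree structure is controlled. We then let $H\subseteq F$ be an inclusion-wise minimal subgraph connecting $R$. Following the earlier analysis, we pay $c(F)$ in full, not just $c(H)$.

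\textbf{Step 2 (reusing the redundant edges).} Let $T$ be the set of odd-degree vertices of $H$. We bound the cheapest degree-respecting $T$-join by a convex combination of three candidates:
\begin{itemize}
\item the LP point $x^*/2$, which lies in the $T$-join dominant because every $T$-odd cut separates required vertices and so has $x^*$-value at least $2$;
\item the redundant edges $F\setminus H$, combined with $H$ (a $T$-join can be found inside $F$, or using $F\setminus H$ together with parts of $H$);
\item $H$ itself, whose $T$-join costs at most $c(H)$.
\end{itemize}
Write $c(H)=\alpha\,\mathrm{LP}$ and $c(F\setminus H)=\beta\,\mathrm{LP}$ with $\alpha+\beta\le 1$ (or $2$, depending on the tree guarantee). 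The total cost is then $\alpha+\beta+\min\{\tfrac12,\ \dots\}$. The aim is to choose the weights $\lambda_i$ of the combination so that the worst case over $(\alpha,\beta)$ equals $14/9$. We expect the weights to have denominators of $9$, for instance $(\tfrac{2}{3},\tfrac{1}{3})$ applied in two layers, giving $\tfrac{14}{9}$. The $T$-join's degree is controlled by a degree-bounded $T$-join LP, again solved by iterative rounding with additive violation. The $+6$ total violation comes from splitting the budget as roughly $+2$ on the tree plus $+4$ on the $T$-join.

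\textbf{Step 3 (shortcutting).} We shortcut the resulting Eulerian multigraph $H\cup J$ to a closed walk through $R$. Shortcutting never increases degrees, so it preserves the degree bounds.

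\textbf{Main obstacle.} The hardest step is Step 2. Using $F\setminus H$ for the $T$-join means the join can contain edges whose degree at $v$ is not accounted for by the LP. Those edges must be counted in both the cost and the degree budget without exceeding $+6$. We would first try the observation that $F$ itself satisfies parity constraints making $F\triangle J'$ Eulerian for a suitable $J'$. We would then compare $H\cup(F\setminus H)\cup(\text{small join})$ against $H\cup(x^*\text{-based join})$, taking whichever is better, and balance the two bounds to reach $14/9$.
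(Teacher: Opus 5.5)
Your Step~2 rests on a false claim: it is not true that every $T$-odd cut separates required vertices. The set $T=\odd(H)$ routinely contains Steiner vertices; a Steiner branching vertex of degree $3$ in $H$ is already one. The singleton of such a vertex, or more generally any $T$-odd $S\subseteq V\setminus X$ (writing $X$ for your $R$), is a $T$-odd cut on which your LP imposes no lower bound at all. So $\frac12x^*$ need not lie in the $T$-join dominant. A sanity check confirms that something is missing: if your first candidate were valid, it alone would give $c(H)+\frac12c(x^*)\le c(F)+\frac12\OPT\le\frac32\OPT$, and there would be no reason to settle for $14/9$.

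Covering these all-Steiner cuts is the crux of the subset problem. It needs three ingredients that your plan omits or leaves unspecified.
\begin{itemize}
\item First, the LP must contain the constraints $x(\delta(v))\le x(\delta(S))$ for every nonempty $S\subseteq V\setminus X$ and $v\in S$. These give $x^*(\delta(S))\ge2$ whenever $S$ contains a Steiner vertex with $x^*(\delta(v))\ge2$.
\item Second, the Steiner vertices not handled this way, $A=\{v\in V\setminus X: x^*(\delta(v))<2\}$, receive $B_v=1$. Friggstad--Mousavi's Lemma~12 then forces them to have even degree (at most $8$) in the parity-constrained output $F$. Hence $F$ crosses every $S\subseteq A$ an even number of times.
\item Third, let $S\subseteq A$ be $T$-odd for $T=\odd(F^m)$, where your $H$ is $F^m$. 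Then $\chi_{F^m}(\delta(S))$ is odd, and it cannot equal $1$ by minimality of $F^m$, since $S$ contains no terminal. So it is at least $3$, and parity of $F$ then forces $\chi_{F^r}(\delta(S))\ge1$ for $F^r=F\setminus F^m$.
\end{itemize}

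With these facts, the paper writes down the explicit fractional $T$-join $y=\frac19\chi_{F^m}+\frac23\chi_{F^r}+\frac49x^*$.
\begin{itemize}
\item Terminal-separating cuts receive at least $\frac19+\frac49\cdot2=1$.
\item Steiner cuts meeting $V\setminus A$ receive at least $\frac19+\frac49\cdot2=1$, by the first ingredient and oddness of $\chi_{F^m}(\delta(S))$.
\item $T$-odd cuts inside $A$ receive at least $\frac19\cdot3+\frac23=1$.
\end{itemize}
The resulting cost is $c(F^m)+c(y)\le\frac{10}9c(F)+\frac49c(x^*)\le\frac{14}9\OPT$. By contrast, your \emph{weights $(\frac23,\frac13)$ applied in two layers} and $\min\{\frac12,\dots\}$ are placeholders, not a verified feasible point. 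The main obstacle you flag is left unresolved.

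Your degree accounting also does not match what is available. The tree guarantee is $\deg_F(v)\le\frac{b_v}2+3$ outside $A$ and $\deg_F(v)\le8$ on $A$. The bounded-degree $T$-join LP is integral once the bounds have the correct parity, so no violation arises from rounding the join. The $+6$ comes instead from the residual bounds $r_v=b_v+6-\deg_{F^m}(v)$ together with the check $\deg_{F^m}(v)+y(\delta(v))\le b_v+6$. On $A$, that check uses $\deg_{F^m}(v)\le4$, $\deg_F(v)\le8$, $x^*(\delta(v))<2$ and $b_v\ge2$ to reach the bound $8\le b_v+6$.

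Finally, Step~3 is not needed. $H\uplus J$ is already a connected even subgraph of $2G$ spanning $X$ once terminal-free components are discarded. Shortcutting would in general require edges outside $G$.
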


\begin{theorem}\label{thm:bd-subset-path-11-5-6}
There exists a $(11/5,+6)$-approximation algorithm for the bounded-degree subset traveling salesman path problem.
\end{theorem}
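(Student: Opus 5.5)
The plan is to follow the template of Friggstad and Mousavi for the subset circuit version, adapted to the path setting. Throughout, $R$ denotes the terminal set with $s,t \in R$, and $x^*$ denotes an optimal solution of the natural cut LP for the subset path problem. That LP requires $x(\delta(S)) \ge 2$ for every $S$ that separates terminals but does not separate $s$ from $t$, $x(\delta(S)) \ge 1$ for every $S$ separating $s$ from $t$, and $x(\delta(v)) \le b_v$ for all $v$. Its value is at most $\OPT$.

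\textbf{Step 1: connected graph $H$.} Run the parity-constrained version of Lau and Singh's bounded-degree Steiner tree rounding on $x^*$. The parities are now chosen so that $s,t$ have odd degree and every other vertex has even degree. This yields a multigraph $H$ spanning $R$ with $c(H) \le 2\,c(x^*)$ and $\deg_H(v) \le b_v + O(1)$. Let $F \subseteq H$ be an inclusion-wise minimal subgraph connecting $R$. Set $T = \odd(F) \,\triangle\, \{s,t\}$, the set of vertices whose parity in $F$ is wrong for an $s$-$t$ Eulerian path.

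\textbf{Step 2: two bounds on the $T$-join.} This is where the redundant edges of $H$ are used.
\begin{enumerate}
\item[(a)] Because $\odd(H) = \{s,t\}$, the edge set $H \setminus F$ is itself a $T$-join. Hence the minimum $T$-join costs at most $c(H) - c(F)$, so this bound is good whenever $F$ is expensive.
\item[(b)] A Hoogeveen-style fractional point $y = \alpha x^* + \beta \chi^F$ lies in the $T$-join dominant. For a $T$-odd cut $S$, either $S$ separates $s$ from $t$ or it does not. In the first case $\chi^F(\delta(S)) \ge 1$ and $x^*(\delta(S)) \ge 1$. In the second case $S$ separates terminals, so $x^*(\delta(S)) \ge 2$, and in addition $\chi^F(\delta(S)) \ge 2$ whenever $F$ crosses $S$ only on the terminal side. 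Choosing $(\alpha,\beta)$ to make all these cuts at least $1$ gives a bound of the form $\alpha\,c(x^*) + \beta\,c(F)$, which is good when $F$ is cheap.
\end{enumerate}
Writing $c(F) = \gamma\,\OPT$ and $c(H) \le 2\,\OPT$, the total cost is at most
\[
\bigl(\gamma + \min\{2 - \gamma,\ \alpha + \beta\gamma\}\bigr)\OPT .
\]
Maximizing over $\gamma$ should give $11/5$. If the naive choice of $(\alpha,\beta)$ does not reach this, I would refine (b) by also mixing in $\chi^{H \setminus F}$, or by splitting $F$ into its $s$-$t$ path and the remainder and weighting the two parts differently. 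The exact coefficients have to be tuned so that the balance point lands on $11/5$.

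\textbf{Step 3: degrees.} Option (a) costs nothing in degree, since $F \cup (H \setminus F) = H$. For option (b) the $T$-join must obey degree bounds. The plan is to iteratively round an LP whose feasible region is the $T$-join dominant intersected with degree constraints $z(\delta(v)) \le \alpha b_v + \beta \deg_F(v)$. This LP is fractionally feasible via $y$, and rounding it (in the style of bounded-degree $T$-join rounding) gives additive error $O(1)$. Choosing the cheaper of the two outputs, and bounding the final degree by $\deg_F$ plus the join degree, should give $b_v + 6$. This requires $\deg_F$ to be roughly controlled by $\lceil b_v/2\rceil$ plus a constant, which follows from the parity-constrained rounding.

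The main obstacle is keeping the degree violation within $+6$ in option (b) while the cost weights $\alpha,\beta$ are not $1/2$. Degree contributions $\alpha b_v + \beta\deg_F(v)$ with, say, $\beta$ near $1/3$ do not obviously add up to $b_v$. I would first try to secure $\deg_F(v) \le b_v/2 + 2$ from the Steiner rounding and then verify that $\deg_F + \alpha b_v + \beta\deg_F + (\text{rounding error})$ stays within $b_v + 6$. If that fails, I would shift weight onto $H \setminus F$, whose degrees are already accounted for within $\deg_H$.
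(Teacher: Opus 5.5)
Your Step~1 relies on a rounding that does not exist in the form you need, and the rest of the plan is built on it. The parity-constrained Lau--Singh rounding of Friggstad and Mousavi (Lemma~\ref{lem:fm-structured}) can only force \emph{even} degree on a prescribed set $A\subseteq V\setminus X$ of Steiner vertices with $B_v=1$. It gives no parity control at terminals (so it cannot make $s,t$ odd) or at Steiner vertices outside $A$. Hence $\odd(H)=\{s,t\}$ need not hold, $H\setminus F$ is not a $T$-join in general, and option~(a) disappears. Had such a rounding existed, option~(a) would just output $F\uplus(H\setminus F)=H$, and your own expression satisfies $\gamma+\min\{2-\gamma,\alpha+\beta\gamma\}\le 2$, so no tuning could ``land on'' $11/5$; the framework is not tracking the real constraints.

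There is also a scale mismatch. Rounding $x^*$ against bounds $b_v$ gives $c(H)\le 2c(x^*)$ but only $\deg_F(v)\le b_v+3$. That leaves room for $O(1)$ join edges at $v$, while your fractional join has degree about $(\alpha+\beta)b_v$ there. The bound $b_v/2+O(1)$ you want in Step~3 would require rounding $x^*/2$, which is infeasible for the Steiner-tree LP because the path LP only gives $x^*(\delta(S))\ge 1$ on $s$-$t$ cuts. The idea you are missing is the zero-cost auxiliary edge $e_0=st$. Rounding $\bar x=(x^*+\chi_{\{e_0\}})/2$ with $B_v\approx b_v/2$ yields $c(F)\le c(x^*)\le\OPT$ and $\deg_F(v)\le b_v/2+O(1)$. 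The price is that $F_0^m$ minus the copies of $e_0$ may split into components $C_s\ni s$ and $C_t\ni t$. This is repaired fractionally by a shortest $s$-$t$ path $P$ and a subpath $Q$ of $P$ joining $C_s$ to $C_t$.

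Second, your cut analysis in Step~2(b) is incomplete and partly inverted. For $T=\odd(F)\triangle\{s,t\}$, a $T$-odd $s$-$t$ cut has $\chi_F(\delta(S))$ \emph{even} (so $\ge 2$ when crossed), while the other terminal-separating $T$-odd cuts only have it odd (so $\ge 1$). More importantly, not every non-$s$-$t$ $T$-odd cut separates terminals. A Steiner vertex of odd degree in $F$ is by itself a $T$-odd cut, and your LP puts no lower bound on $x^*$ there. Minimality of $F$ gives $\chi_F(\delta(S))\ge 3$ on terminal-free $T$-odd cuts, but that suffices only with weight at least $\tfrac13$ on $F$. For example, Hoogeveen-type weights $\tfrac13$ on $x^*$ and on $F^m$ force $P$/$Q$ weights summing to $\tfrac23$ on the $s$-$t$ cuts that $F^m$ misses, for a total of $\tfrac73$.

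The paper reaches $\tfrac{11}{5}$ by lowering the weight on $F^m$ to $\tfrac15$ and covering terminal-free cuts with two further ingredients. The first is the LP constraint $x(\delta(v))\le x(\delta(S))$, which gives $x^*(\delta(S))\ge 2$ when $S\not\subseteq A$. The second applies when $S\subseteq A$: evenness on $A$ of the \emph{full} rounded graph $F=F^m\uplus F^r$ forces $\chi_{F^r}(\delta(S))\ge 1$, so $\tfrac15\cdot 3+\tfrac25\cdot 1=1$. That is how the redundant edges are really used: not as an alternative join, but inside the single point $y=\tfrac15\chi_{F^m}+\tfrac25\chi_{F^r}+\tfrac25x^*+\tfrac15\chi_P+\tfrac25\chi_Q$. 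Its cost is at most $\tfrac65c(F)+(\tfrac25+\tfrac15+\tfrac25)\OPT\le\tfrac{11}{5}\OPT$.

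Finally, Step~3 needs no new rounding. The residual bounds $r_v=b_v+6-\deg_{F^m}(v)$ have exactly the parity that makes the bounded-degree $T$-join LP integral, and the degree check reduces to $\deg_{F^m}(v)+y(\delta(v))\le b_v+6$.
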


We also give an alternative trade-off for the path problem.

\begin{theorem}\label{thm:bd-subset-path-2-8}
There exists a $(2,+8)$-approximation algorithm for the bounded-degree subset traveling salesman path problem.
\end{theorem}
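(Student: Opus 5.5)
We prove the $(2,+8)$ guarantee for the bounded-degree subset traveling salesman path problem as follows. Let $R$ be the required vertex set with $s,t\in R$, and let $\OPT$ be an optimal solution: a multigraph $H^*$ of cost $\OPT$ admitting an Eulerian $s$-$t$ path through all of $R$, with $\deg_{H^*}(v)\le b_v$. The plan has two stages. First we compute a parity-constrained bounded-degree Steiner forest, in the spirit of the modified Lau--Singh algorithm used by Friggstad and Mousavi. Then we make it Eulerian by doubling edges, with no $T$-join step. Doubling is what gives the ratio $2$ rather than $11/5$, and it is paid for by extra degree violation.

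\textbf{Step 1 (LP).} Write the natural cut LP in variables $x_e\ge 0$:
\begin{itemize}
\item $x(\delta(S))\ge 2$ for every $S$ separating $R$ that does not separate $s$ from $t$;
\item $x(\delta(S))\ge 1$ for every $S$ separating $s$ from $t$;
\item $x(\delta(v))\le b_v$ for every $v$.
\end{itemize}
The incidence vector of $H^*$ is feasible, since an Eulerian $s$-$t$ walk crosses each cut of the first type an even, positive number of times and each cut of the second type an odd number of times. Hence the LP optimum is at most $\OPT$. The half-solution $x/2$ is then a fractional Steiner tree for $R$ that is $1/2$-feasible on $s$-$t$ cuts, with degree bound $b_v/2$.

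\textbf{Step 2 (rounding).} We would rather avoid handling $s$-$t$ cuts separately. Instead, apply iterative relaxation (Lau--Singh style, with Singh--Lau's $+1$ for spanning-type structures, or $+3$ for Steiner) to the Steiner-tree LP with bounds $\lfloor b_v/2\rfloor$, using $x/2$ as a feasible point. This yields a Steiner tree $F$ on $R$ with cost at most $\OPT/2$ and
\[
\deg_F(v)\le \lfloor b_v/2\rfloor + 3 .
\]

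\textbf{Step 3 (Eulerian path).} Let $P$ be the $s$-$t$ path in $F$. Double every edge of $F\setminus P$ and keep a single copy of each edge of $P$. The result is connected, every vertex except $s,t$ has even degree, and $s,t$ have odd degree, so it admits an Eulerian $s$-$t$ path visiting $R$. Its cost is at most $2c(F)\le \OPT$, well within the factor $2$. The degree of each vertex is at most
\[
2\deg_F(v)\le 2\lfloor b_v/2\rfloor+6\le b_v+6 .
\]
If the rounding only achieves $+4$ per vertex in $F$, the bound becomes $b_v+8$. That matches the claimed $+8$ and suggests the intended route: the additive Steiner rounding gives $+4$ on $\lfloor b_v/2\rfloor$-type bounds, and the remaining factor-$2$ cost slack is used to absorb the LP-to-integral gap from the parity constraints.

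\textbf{Main obstacle.} The delicate step is Step 2: getting a Steiner tree of cost at most $\OPT$ (not $\OPT/2$ is fine) whose degree is at most $\lfloor b_v/2\rfloor+4$. The difficulty is that the cut LP only certifies fractional degree $b_v/2$. Rounding the odd part $b_v$ down to $\lfloor b_v/2\rfloor$ loses $1/2$, and doubling then turns this into a $+1$ that must be absorbed.

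I would first try the Lau--Singh bounded-degree Steiner tree theorem. It gives cost at most $2\,\mathrm{LP}$ and degree at most $b'_v+3$ for fractional bound $b'_v=b_v/2$, hence integral degree at most $\lfloor b_v/2\rfloor+3$ (after taking floors, since degrees are integers). Doubling then gives $b_v+6\le b_v+8$, at cost at most $2\cdot 2\cdot\OPT/2=2\,\OPT$ once we account for the Steiner rounding losing a factor $2$ against $x/2$. Checking that this cost accounting indeed gives exactly $2\,\OPT$ is where I expect the real care to be needed.
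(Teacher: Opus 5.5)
Your argument breaks at Step 2. You note yourself that $x/2$ has value only $1/2$ on cuts separating $s$ from $t$. At $s,t$ its degree $x(\delta(v))/2$ may also reach $b_v/2>\lfloor b_v/2\rfloor$. You then use $x/2$ as a feasible point for the Lau--Singh Steiner-tree rounding, but it is not feasible, so $c(F)\le 2c(x/2)\le\OPT$ is unjustified. A naive repair such as $(x+\chi_P)/2$, for a shortest $s$-$t$ path $P$, only certifies $c(F)\le 2\OPT$, and doubling then costs up to $4\OPT$. (Separately, your claims $c(F)\le\OPT/2$ and $2c(F)\le\OPT$ in Steps 2--3 ignore the factor-$2$ loss of the rounding, as your last paragraph concedes.)

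The paper's feasible repair is a zero-cost auxiliary edge $e_0=st$, with $\bar x=(x^*+\chi_{\{e_0\}})/2$ and $B_s=(b_s+1)/2$, $B_t=(b_t+1)/2$. This gives $c(F)\le c(x^*)\le\OPT$. However, after deleting $e_0$, $F$ may split into an $s$-component and a $t$-component. Doubling cannot reconnect them, and doubling $F$ and adding a shortest $s$-$t$ path costs up to $2c(F)+c(P)\le 3\OPT$. So doubling is the wrong second stage.

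The missing idea is to replace doubling by a bounded-degree $T$-join $J$ with $T=\odd(F)\triangle\{s,t\}$ and residual bounds $r_v=b_v+8-\deg_F(v)$. Its cost is bounded by the fractional point $y=\frac12 x^*+\frac12\chi_P$:
\begin{itemize}
\item on $s$-$t$ cuts, $P$ and $x^*$ each contribute at least $\frac12$;
\item on the other terminal-separating cuts, $x^*$ alone contributes at least $1$.
\end{itemize}
Because $\odd(F\uplus J)=\{s,t\}$, the handshake lemma forces $s$ and $t$ into one component, so the split caused by deleting $e_0$ is repaired for free. The total cost is $c(F)+c(y)\le\OPT+\frac12\OPT+\frac12\OPT=2\OPT$.

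For $y$ also to cover $T$-odd cuts $S\subseteq V\setminus X$, you need two ingredients your plan drops:
\begin{itemize}
\item the LP constraints $x(\delta(v))\le x(\delta(S))$ for Steiner sets $S\ni v$, so that $x^*(\delta(S))\ge 2$ whenever $S$ contains a Steiner vertex of $x^*$-degree at least $2$;
\item the Friggstad--Mousavi parity-constrained rounding, which makes every vertex of $A=\{v\notin X: x^*(\delta(v))<2\}$ even in $F$, so no $T$-odd $S$ lies inside $A$.
\end{itemize}
The $+8$ then comes from the vertices of $A$, where $\deg_F(v)\le 8$ and $y(\delta(v))<2\le b_v$. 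It is not the result of a $+4$ Steiner rounding followed by doubling.
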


Finally, as a complement to our result for the bounded-degree traveling salesman path problem, we present the following observation on the integrality gap.

\begin{observation}\label{obs:int_gap}
For every $\epsilon > 0$, there exists an instance of the bounded-degree traveling salesman path problem such that
\begin{itemize}
\item the cost of every (integral) solution, without any degree restrictions, is no smaller than $(3/2-\epsilon)$ times the LP optimum, and
\item every (integral) solution, without any cost restrictions, has a vertex $v$ whose degree is at least $b_v+2$.
\end{itemize}
\end{observation}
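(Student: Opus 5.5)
The plan is to write down one explicit family of instances, indexed by a length parameter $k$, and verify the two properties separately. For the underlying graph I would start from the classical ladder-shaped example behind the $3/2$ integrality gap of the path-TSP cut LP (the Hoogeveen / An--Kleinberg--Shmoys type instance). This is two long ``rails'' between $s$ and $t$, joined by rungs and carrying the shortest-path metric. I will not pin down the exact edge costs until the calculation below.

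For the LP side I will take the fractional point that puts $1/2$ on every rail edge and $1$ on every rung. I will check three things about it. Every interior vertex has fractional degree $2$, and $s,t$ have degree $1$. Every $s$-$t$ cut has value at least $1$, and every other cut has value at least $2$, since a set avoiding both endpoints cuts at least four half-edges or two rung units. Its cost is about $k$, up to lower-order terms. I would then set the degree bounds equal to these fractional degrees: $b_v=2$ for interior vertices and $b_s=b_t=1$. This point is then LP-feasible including the degree constraints, so the LP optimum with degree bounds is at most roughly $k$. For the cost lower bound I will reuse the standard argument that any connected multigraph with $s,t$ the only odd vertices pays roughly $3k/2$. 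The reason is that each of the roughly $k$ ``blocks'' must be entered and left so that parity is fixed, which forces about half of the rail length to be paid twice. This gives ratio $3/2-O(1/k)$, and choosing $k$ large in terms of $\epsilon$ finishes the first bullet.

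For the second bullet, the bounds $b_s=b_t=1$ and $b_v=2$ mean that a solution with no degree violation is exactly a Hamiltonian $s$-$t$ path. A violation of $+1$ is impossible by parity: interior vertices have even degree and $s,t$ odd. So it suffices to show that the graph has no Hamiltonian $s$-$t$ path. This is the step I expect to be the main obstacle, because a plain ladder does admit zigzag Hamiltonian paths. I would therefore modify the gadget without disturbing the LP point, either by subdividing rungs or by fixing the parity of the number of rungs. The argument I would aim for is a bipartite colour count. I choose the lengths so that the graph is bipartite with $s$ and $t$ in colour classes whose sizes are incompatible with a Hamiltonian $s$-$t$ path. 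Such a path alternates colours, so with $s$ and $t$ on the same side that side must have exactly one more vertex, and with $s$ and $t$ on opposite sides the two classes must be equal. If every integral solution is then forced to have some $v$ with degree at least $b_v+2$, the second bullet follows.

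The first thing I would try is to subdivide each rung once, so that the rung midpoints also have $b_v=2$ and LP value $1$ on each half. I would then recheck that the cut constraints and the $3/2$ cost lower bound survive, since the added vertices have zero extra cost if their edges are given cost proportional to the rung cost.
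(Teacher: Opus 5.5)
Your proposal has a genuine gap: the ladder cannot satisfy either bullet, however you choose the costs or subdivide the rungs. Take rails $u_1,\dots,u_k$ and $w_1,\dots,w_k$, rungs $u_iw_i$, with $s$ (resp.\ $t$) adjacent to $u_1,w_1$ (resp.\ $u_k,w_k$). Consider the two zigzag Hamiltonian paths $Z_1=s,u_1,w_1,w_2,u_2,u_3,w_3,\dots,t$ and $Z_2=s,w_1,u_1,u_2,w_2,w_3,u_3,\dots,t$. Each uses every rung. Between any two consecutive levels (including $s$ to level $1$ and level $k$ to $t$) they use opposite rail edges, so $\tfrac12(\chi_{Z_1}+\chi_{Z_2})$ is exactly your fractional point. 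Both $Z_i$ meet $b_s=b_t=1$, $b_v=2$ with equality, and $\min_i c(Z_i)\le c(x)$. So there is neither a degree violation nor any cost gap. Subdividing rungs changes nothing, because the degree-$2$ subdivision vertices only force each rung path to be traversed, which the zigzags already do. The ``half of the rail length is paid twice'' argument you cite is valid only when there are no rungs.

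Independently, the bipartite colour count can never be the obstruction. With your bounds, the singleton cut constraints force $x(\delta(v))=b_v$ at every vertex. If $G$ is bipartite with classes $L,R$, summing over each class gives $2|L|-|L\cap\{s,t\}|=x(E)=2|R|-|R\cap\{s,t\}|$, which is precisely the balance a Hamiltonian $s$-$t$ path needs.

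The fix, which is the paper's instance, is to delete the rungs: use $C_{2n}$ with unit costs and $s,t$ antipodal. Your half-integral point then becomes infeasible, since rail vertices would have fractional degree $1$. The right LP point is $x\equiv 1$, of cost $2n$. Because it has $x(\delta(s))=x(\delta(t))=2$, you must take $b_s=b_t=3$ rather than $1$, keeping $b_v=2$ elsewhere.

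The argument then goes as follows.
\begin{itemize}
\item Parity at interior vertices makes all multiplicities along each $s$-$t$ arc congruent mod $2$.
\item The odd degree of $s$ makes one arc use every edge exactly once and the other use every edge $0$ or $2$ times.
\item Connectivity permits at most one unused edge on the latter arc.
\end{itemize}
Hence every solution costs at least $n+2(n-1)=3n-2$, giving ratio $3/2-1/n$. For $n\ge 4$ the doubled arc contains two consecutive doubled edges, so some interior vertex has degree $4=b_v+2$.
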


\paragraph*{Related work.}
Bounded-degree optimization settings have been studied for a variety of fundamental combinatorial optimization problems. For the minimum spanning tree problem, bicriteria approximation algorithms were studied by Ravi, Marathe, Ravi, Rosenkrantz, and Hunt~\cite{ravi1993many}, K\"{o}nemann and Ravi~\cite{konemann2002matter}, and K\"{o}nemann and Ravi~\cite{konemann2005primaldual}. While the degree violation is necessary since, otherwise, it is NP-hard even to find a feasible solution, algorithms that output a (super)optimal\footnote{Here, ``(super)optimal'' is with respect to the optimum for the original problem with no degree violation.} solution with a provable bound on degree violation were studied as well, e.g., Chaudhuri, Rao, Riesenfeld, and Talwar~\cite{chaudhuri2009edmonds}, Ravi and Singh~\cite{ravi2006delegate}, Goemans~\cite{goemans2006minimum}, and Singh \& Lau~\cite{dbst}. Steiner network problems were also extensively studied: we refer interested readers to, e.g., Ravi, Marathe, Ravi, Rosenkrantz, and Hunt~\cite{ravi2001degreeconstrained}, K\"{o}nemann and Ravi~\cite{konemann2003quasi}, Lau, Naor, Salavatipour, and Singh~\cite{lau2009survivable}, Louis and Vishnoi~\cite{louis2010improved}, Lau and Singh~\cite{bdstp}, and 
Lau \& Zhou~\cite{lau2015unified}. Vertex/edge connectivity problems have been studied under degree bounds as well~\cite{lau2009survivable,bansal2009additive,fukunaga2012iterative,nutov2012degree,khandekar2013network}.

Christofides~\cite{christofides} and Serdyukov~\cite{serdyukov1978nekotorykh} independently gave a $3/2$-approximation algorithm for the circuit TSP in the 1970s, and Hoogeveen~\cite{hoogeveen1991} showed that their algorithm is a $5/3$-approximation algorithm for the path TSP. These approximation ratios remained the best performance guarantees for both problems until relatively recently. For the circuit problem, Karlin, Klein, and Gharan~\cite{1.5-eps-random} gave the first constant improvement in the approximation ratio, by more than $10^{-36}$; this result was subsequently derandomized by Karlin, Klein, and Gharan~\cite{karlin2023deterministic}. For the path problem, An, Kleinberg, and Shmoys~\cite{pathtsp} gave the first constant improvement over $5/3$, followed by a sequence of further improvements by Seb\H{o}~\cite{sebo}, Vygen~\cite{vygen2016reassembling}, Gottschalk and Vygen~\cite{gottschalk2018better}, Seb\H{o} and van Zuylen~\cite{sebo2019salesman}, Traub and Vygen~\cite{traub2019approaching}, and Zenklusen~\cite{1.5}. This line of work culminated in a result of Traub, Vygen, and Zenklusen~\cite{traub2020reducing}, who showed a reduction from the path problem to the circuit version with an additive $\epsilon$ loss in the approximation ratio, for any fixed $\epsilon>0$.

\section{Preliminaries}\label{sec:prelim}
\subsection{Problems, Notation, and LP Relaxations}
Throughout this paper, we consider a connected undirected graph $G = (V, E)$,
nonnegative edge costs $\{c_e\}_{e \in E}$,
and degree bounds $\{b_v\}_{v \in V}$ where each $b_v$ is a strictly positive integer.
For $S\subsetneq V$ ($S\neq\emptyset$), let $\delta_G(S)$ (or $\delta(S)$ when clear from the context) denote the set of edges in the cut defined by $S$, i.e., $\delta_G(S):=\{\{i,j\}\in E(G)\mid |\{i,j\}\cap S|=1\}$.
For $x\in\mathbb{R}^E$ and $F\subseteq E$, $x(F)$ denotes $\sum_{f\in F}x_f$, and $c(x)$ denotes  $\sum_{e\in E}c_ex_e$.

We denote by $2G$ the multigraph obtained by duplicating every edge in $E$;
a subgraph of $2G$ may use each edge of $E$ at most twice. Given a (multi)graph $H=(V,E)$, let $\deg_H(v)$ for $v\in V$ denote the degree of $v$ in $H$. Let $\odd(H):=\{v\in V\mid \deg_H(v)\text{ is odd}\}$ denote the set of vertices that have odd degree in $H$.
For a (multi)set of edges $F$, the \emph{incidence vector} of $F$, denoted $\chi_F\in\mathbb{Z}_{\geq 0}^E$, is a vector of which the coordinate corresponding to $e\in E$ is the multiplicity of $e$ in $F$ for each $e\in E$. Let $c(F)$ denote the total edge cost of $F$.

Finally, let  $X\triangle Y:=(X\setminus Y)\cup (Y\setminus X)$ denote the symmetric difference of $X$ and $Y$.

\begin{problem}[Bounded-Degree Traveling Salesman Problem (BDTSP)]\label{prob:bdtsp}
Given a graph $G = (V, E)$, edge costs $\{c_e\}_{e \in E}$, and \emph{degree bounds} $\{b_v\}_{v \in V}$ with $b_v$ even for all $v \in V$,
find a connected spanning subgraph $H$ of $2G$ that minimizes $\sum_{e \in E(H)} c_e$ subject to:
\begin{itemize}
    \item $\deg_H(v)$ is even for all $v \in V$, and
    \item $\deg_H(v) \le b_v$ for all $v \in V$.
\end{itemize}
\end{problem}

Following is an LP relaxation for this problem.
\begin{equation}\label{eq:bd-tsp}
\begin{array}{ll@{}ll}
\text{minimize} \quad & \displaystyle \sum_{e \in E} c_e x_e \\
\text{subject to} \quad & x(\delta(S)) \ge 2, & & \forall S \subsetneq V,\; S \neq \emptyset, \\
& x(\delta(v)) \le b_v, & \quad & \forall v \in V, \\
& x_e \ge 0, & & \forall e \in E.
\end{array}
\end{equation}

\begin{problem}[Bounded-Degree Traveling Salesman Path Problem (BDTSPP)]\label{prob:bdpath}
Given a graph $G = (V, E)$, edge costs $\{c_e\}_{e \in E}$, two distinct vertices $s, t \in V$,
and \emph{degree bounds} $\{b_v\}_{v \in V}$ with $b_v$ odd if and only if $v \in \{s, t\}$,
find a connected spanning subgraph $H$ of $2G$ that minimizes $\sum_{e \in E(H)} c_e$ subject to:
\begin{itemize}
    \item $\deg_H(v)$ is odd if and only if $v \in \{s, t\}$, and
    \item $\deg_H(v) \le b_v$ for all $v \in V$.
\end{itemize}
\end{problem}

Following is an LP relaxation for this problem.
\begin{equation}\label{eq:bd-path-tsp}
\begin{array}{ll@{}ll}
\text{minimize} \quad & \displaystyle \sum_{e \in E} c_e x_e \\
\text{subject to} \quad & x(\delta(S)) \ge 1, & & \forall S \subsetneq V,\; S \neq \emptyset,\; |S \cap \{s, t\}| = 1, \\
& x(\delta(S)) \ge 2, & & \forall S \subsetneq V,\; S \neq \emptyset,\; |S \cap \{s, t\}| \ne 1, \\
& x(\delta(v)) \le b_v, & \quad & \forall v \in V, \\
& x_e \ge 0, & & \forall e \in E.
\end{array}
\end{equation}

\begin{problem}[Bounded-Degree Subset Traveling Salesman Problem (BDSTSP)]\label{prob:bdsubset}
Given a graph $G = (V, E)$, edge costs $\{c_e\}_{e \in E}$, a set of \emph{terminals} $X \subseteq V$,
and \emph{degree bounds} $\{b_v\}_{v \in V}$ with $b_v$ even for all $v \in V$,
find a connected subgraph $H$ of $2G$ that spans all terminals in $X$, minimizes $\sum_{e \in E(H)} c_e$, and subject to:
\begin{itemize}
    \item $\deg_H(v)$ is even for all $v \in V$, and
    \item $\deg_H(v) \le b_v$ for all $v \in V$.
\end{itemize}
\end{problem}

We use the following LP relaxation from \cite{bdtsp}.
\begin{equation}\label{eq:bd-subset-tsp}
\begin{array}{ll@{}ll}
\text{minimize} & \displaystyle \sum_{e \in E} c_e x_e \\
\text{subject to}
& x(\delta(S)) \ge 2, && \forall S \subseteq V,\; S \cap X \ne \emptyset,\; (V \setminus S) \cap X \ne \emptyset, \\
& x(\delta(v)) \le b_v, && \forall v \in V, \\
& x(\delta(v)) \le x(\delta(S)), && \forall S \subseteq V \setminus X,\; S \ne \emptyset,\; \forall v \in S, \\
& 0 \le x_e \le 2, && \forall e \in E.
\end{array}
\end{equation}

The vertices outside $X$ are called \emph{Steiner vertices}.
For $S\subsetneq V$ ($S\neq\emptyset$), we will use $S$ as shorthand for the cut defined by $S$ and its complement ($V\setminus S$). We say a cut $S$ is \emph{terminal-separating} if both $S \cap X$ and $(V \setminus S) \cap X$ are nonempty.

\begin{problem}[Bounded-Degree Subset Traveling Salesman Path Problem (BDSTSPP)]\label{prob:bdsubsetpath}
Given a graph $G = (V, E)$, edge costs $\{c_e\}_{e \in E}$, a set of \emph{terminals} $X \subseteq V$ with $s,t \in X$,
and \emph{degree bounds} $\{b_v\}_{v \in V}$ with $b_v$ odd if and only if $v \in \{s, t\}$,
find a connected subgraph $H$ of $2G$ that spans all terminals in $X$ and minimizes $\sum_{e \in E(H)} c_e$, subject to:
\begin{itemize}
    \item $\deg_H(v)$ is odd if and only if $v \in \{s, t\}$, and
    \item $\deg_H(v) \le b_v$ for all $v \in V$.
\end{itemize}
\end{problem}

\paragraph*{Bounded-degree spanning trees.}
In the \emph{bounded-degree spanning tree problem}, we are given a graph $G = (V, E)$, edge costs $\{c_e\}_{e \in E}$, and \emph{degree bounds} $\{B_v\}_{v \in V}$ as input,
and the goal is to find a minimum-cost spanning tree $F$ of $G$ that respects the degree bounds.

Singh and Lau~\cite{dbst} gave an iterative rounding algorithm for this problem.
\begin{lemma}[Singh and Lau~\cite{dbst}]\label{lem:bdst}
There is a polynomial-time algorithm that, for any feasible  instance, outputs a spanning tree with
$c(F)\le \OPT$ and $\deg_F(v)\le B_v+1$ for all $v\in V$, where
$\OPT$ is the cost of the optimal solution (with no degree violation).
\end{lemma}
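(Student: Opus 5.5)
The statement is the iterative relaxation result of Singh and Lau~\cite{dbst}. I would reprove it by solving an LP relaxation of the bounded-degree spanning tree problem and repeatedly \emph{dropping} degree constraints, never fixing edges. The relaxation is over a current edge set $E'$ (initially $E$) and a set $W\subseteq V$ of vertices whose degree constraint is still enforced (initially $W=V$):
\[
\min\ c(x)\quad\text{s.t.}\quad x(E')=|V|-1,\qquad x(E'(S))\le |S|-1\ \ \forall\,\emptyset\ne S\subseteq V,\qquad x(\delta(v))\le B_v\ \ \forall v\in W,\qquad x\ge 0.
\]
This LP can be solved in polynomial time, since the subtour constraints have a min-cut separation oracle. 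For a feasible instance the LP is feasible, and its value is at most $\OPT$.

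The algorithm repeats the following step while $W\neq\emptyset$. Compute an extreme point optimum $x$. Delete every edge with $x_e=0$ from $E'$. If some $v\in W$ is incident to at most $B_v+1$ edges of $E'$, remove $v$ from $W$. Each of these operations keeps the current $x$ feasible, so the LP value never increases. When $W=\emptyset$, the remaining LP is the spanning tree polytope, which is integral, so we obtain a spanning tree $F\subseteq E'$ with $c(F)\le\OPT$.

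For the degree bound, consider a vertex $v$ at the moment it leaves $W$. From then on $F$ uses only edges of the current $E'$, which is never enlarged, so $\deg_F(v)\le B_v+1$. If $v$ is never removed, its constraint $\deg_F(v)\le B_v$ is still enforced at the end. Either way the degree guarantee holds.

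It remains to show that progress is always possible. That is, at an extreme point $x$ with $W\ne\emptyset$ and full support on $E'$, some $v\in W$ has $|\delta_{E'}(v)|\le B_v+1$. This is the main obstacle.

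First, by uncrossing (the function $S\mapsto x(E(S))$ is supermodular on intersecting sets), $x$ is the unique solution of a linearly independent system. This system consists of the tight constraints $x(E(S))=|S|-1$ for $S$ in a laminar family $\mathcal L$ containing $V$, together with tight degree constraints for a set $T\subseteq W$. Moreover $|E'|=|\mathcal L|+|T|$.

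Then I would use a fractional token argument. Suppose, for contradiction, that every $v\in T$ has $|\delta_{E'}(v)|\ge B_v+2$. Each edge $e=uv$ contributes one token, split as follows: $(1-x_e)/2$ goes to each endpoint, and $x_e$ goes to the smallest set of $\mathcal L$ containing both $u$ and $v$.
\begin{itemize}
\item A set $S\in\mathcal L$ with children $C_1,\dots,C_k$ receives $x(E(S))-\sum_i x(E(C_i))=|S|-1-\sum_i(|C_i|-1)$. This is an integer, and it is nonzero by linear independence, so it is at least $1$.
\item A vertex $v\in T$ receives $\tfrac12\bigl(|\delta_{E'}(v)|-x(\delta(v))\bigr)=\tfrac12(|\delta_{E'}(v)|-B_v)\ge 1$.
\end{itemize}
So the $|E'|$ tokens suffice to give every constraint at least one token. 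Since $|E'|=|\mathcal L|+|T|$, every constraint receives exactly one token and no token is left over. In particular, every vertex outside $T$, and every endpoint of an edge with $x_e=1$, must receive nothing.

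The delicate step is turning this exact accounting into a contradiction. The exactness forces every edge to have both endpoints in $T$ (or $x_e=1$). I would then show that the sum of the degree constraints over $T$, which counts each edge twice, is a linear combination of the laminar constraints (via $x(E'(V))$ and the children structure). That contradicts linear independence, so some $v\in W$ has $|\delta_{E'}(v)|\le B_v+1$, and the iteration can always proceed.
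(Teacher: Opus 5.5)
The paper gives no proof of this lemma; it cites Singh and Lau, and your plan is their iterative-relaxation argument. The framework is correct: the monotone LP value, the degree accounting when a vertex leaves $W$, the laminar characterization with $|E'|=|\mathcal{L}|+|T|$, and the fractional token count. The closing step you call delicate, however, has two holes as written.

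First, $\sum_{v\in T}\chi_{\delta_{E'}(v)}$ does not count every edge twice. Exactness forces only the edges with $x_e<1$ to have both endpoints in $T$; an edge with $x_e=1$ may have one or no endpoint in $T$. (Relatedly, ``every endpoint of an edge with $x_e=1$ must receive nothing'' is false. Such an endpoint receives nothing \emph{from that edge}, but an endpoint in $T$ receives exactly one token overall.)

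The missing observation concerns an edge $e$ with $x_e=1$ and the smallest $S\in\mathcal{L}$ containing both its endpoints. Under exactness, $S$ receives exactly one token, and $e$ already supplies it. By full support, $e$ is therefore the only edge of $E'$ whose smallest containing set is $S$, so $\chi_{\{e\}}=\chi_{E'(S)}-\sum_i\chi_{E'(C_i)}$. Hence
\[
\sum_{v\in T}\chi_{\delta_{E'}(v)} \;=\; 2\chi_{E'(V)} \;-\; \sum_{e\in E':\,x_e=1}\bigl(2-|e\cap T|\bigr)\chi_{\{e\}}
\]
lies in the span of $\{\chi_{E'(S)}:S\in\mathcal{L}\}$. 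That is the linear dependence you need.

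Second, this dependence contradicts linear independence only when $T\neq\emptyset$. If $T=\emptyset$, your hypothesis ``every $v\in T$ has $|\delta_{E'}(v)|\ge B_v+2$'' is vacuously true, so no contradiction can come from it. Handle this case directly. Exactness (every vertex lies outside $T$ and receives nothing) forces $x_e=1$ on all of $E'$. Then $E'$ is a spanning tree, and every $v\in W$ has $|\delta_{E'}(v)|=x(\delta(v))\le B_v$, so progress is possible. With these two additions the proof is complete.
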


\paragraph*{Bounded-degree Steiner trees.}
In the \emph{bounded-degree Steiner tree problem}, we are given a graph $G = (V, E)$, edge costs $\{c_e\}_{e \in E}$, a set of \emph{terminals} $X \subseteq V$, and \emph{degree bounds} $\{B_v\}_{v \in V}$ as input,
and the goal is to find a minimum-cost connected subgraph that spans all terminals while respecting the degree bounds. Consider the following LP relaxation for this problem.
\begin{equation}\label{eq:bdst}
\begin{array}{ll@{}ll}
\text{minimize} \quad & \displaystyle \sum_{e \in E} c_e x_e \\
\text{subject to} \quad & x(\delta(S)) \ge 1, & \quad & \forall S \subseteq V,\; S \cap X \ne \emptyset,\; (V \setminus S) \cap X \ne \emptyset, \\
& x(\delta(v)) \le B_v, & & \forall v \in V, \\
& x_e \ge 0, & & \forall e \in E.
\end{array}
\end{equation}

Lau and Singh~\cite{bdstp} gave a bicriteria approximation algorithm that, in polynomial time, finds a subgraph spanning all terminals, with degree at most $B_v+3$ at each vertex $v$, and with total edge cost at most twice the optimal value of~\eqref{eq:bdst}.
Friggstad and Mousavi~\cite{bdtsp} gave a modified iterative rounding procedure that enforces a prescribed set of Steiner vertices to have even degree:
\begin{lemma}[{\cite[Lemma~12]{bdtsp}}]\label{lem:fm-structured}
Consider an instance of the bounded-degree Steiner tree problem $(G=(V,E), \{c_e\}_{e \in E}, \{B_v\}_{v \in V}, X)$, together with a feasible solution $\bar x$ to~\eqref{eq:bdst} and a set $A \subseteq V \setminus X$ where $B_v = 1$ for all $v \in A$.
Then there exists a polynomial-time algorithm that outputs a connected subgraph (not necessarily a tree) $F$ of $2G$ spanning $X$ with the following properties:
\begin{itemize}
    \item $c(F) \le 2c(\bar{x})$;
    \item $\deg_{F}(v) \le B_v + 3$ for all $v \in V \setminus A$;
    \item $\deg_{F}(v) \le 8$ and $\deg_{F}(v)$ is even for all $v \in A$;
    \item 
    $\deg_{F^m}(v) \le B_v + 3$ for all $v \in V$, where 
    $F^m$ is an inclusion-wise minimal subset of edges of $F$ that is feasible for the bounded-degree Steiner tree instance.
\end{itemize}
\end{lemma}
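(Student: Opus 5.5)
The plan is to rerun the Lau--Singh iterative relaxation for bounded-degree Steiner tree on \eqref{eq:bdst}, starting from an extreme point that dominates $\bar x$ in cost, and to post-process the vertices of $A$ so their degrees become even. The two properties that do not involve $A$ are exactly what Lau and Singh~\cite{bdstp} prove:
\begin{itemize}
\item the cost bound $c(F)\le 2c(\bar x)$;
\item the degree bound $B_v+3$ on the minimal subgraph.
\end{itemize}
The new content is therefore the parity and degree-$8$ statement on $A$. I would keep $F^m$ as the Lau--Singh output and let $F\supseteq F^m$ be $F^m$ plus extra parallel copies of edges at $A$. This makes the fourth bullet automatic, since $F^m$ is then an inclusion-wise minimal feasible subset of $F$.

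\textbf{Step 1 (iterative rounding).} Solve the LP, restricted at each iteration to the residual instance, and take an extreme point. Delete edges with $x_e=0$. If some edge has $x_e\ge 1/2$, round it to $1$, paying at most $2c_ex_e$, contract it, and decrease the degree bounds of its endpoints by $x_e$. Otherwise, the laminar-family counting argument of Lau--Singh yields a vertex whose degree constraint has at most four fractional incident edges; drop that constraint, which costs $+3$. This is the standard proof and gives $c(F^m)\le 2c(\bar x)$ and $\deg_{F^m}(v)\le B_v+3$ for all $v$. In particular, $\deg_{F^m}(v)\le 4$ for every $v\in A$.

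\textbf{Step 2 (parity at $A$).} For each $v\in A$ with odd $\deg_{F^m}(v)$, add one more copy of an edge incident to $v$. Doubling every $F^m$-edge at $v$ is the crude version; it gives degree at most $2\cdot 4=8$, which explains the constant $8$, and keeps all other parities unchanged except at the far endpoints. To avoid breaking parity elsewhere and to keep the cost within $2c(\bar x)$, I would move the parity decision into the rounding itself. When an edge $e=uv$ with $v\in A$ and $x_e\ge1/2$ is selected, I would include it with multiplicity $2$ (charged $4c_ex_e$) only in the case where $x(\delta(v))\le 1$ forces it to be the unique heavy edge at $v$. Every other edge is selected normally. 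The argument then has to show that such doubled edges can be paid for by mass that the Lau--Singh charging leaves unused, namely the slack from vertices $v\in A$ whose residual bound $B_v=1$ is spent.

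\textbf{Main obstacle.} The hard part is the cost accounting for the parity correction. Lau--Singh's factor $2$ is already tight edge by edge ($x_e\ge1/2$ rounded to $1$), so any duplicated edge must be charged to LP mass that is otherwise uncharged. My first attempt would be a modified charging scheme: every vertex of $A$ has total fractional degree at most $1$ but ends with integral degree $2$ or $4$. I would prove, via a local token argument in the laminar-family extreme-point analysis, that the edges at $v\in A$ can always be selected in pairs whose combined $x$-value is at least half their multiplicity. Failing that, I would restrict parity fixing to the final minimal tree and argue that doubling one pendant-side edge at each odd $A$-vertex is covered by the Steiner-vertex constraint $x(\delta(v))\le x(\delta(S))$ structure of the subset LP.
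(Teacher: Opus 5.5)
The paper does not prove this lemma. It imports it from Friggstad and Mousavi, and describes it as a Lau--Singh iterative rounding modified to impose parity constraints, whose output $F$ may contain edges that are redundant for connectivity. Your Step~1 is just the Lau--Singh theorem applied to $F^m$. One correction there: the additive $+3$ scheme never rounds a fractional edge at a vertex whose degree constraint is still active. Such a vertex receives only edges with $x_e=1$ until its constraint is dropped at support degree at most $4$; decreasing bounds by a fractional $x_e$ gives only a multiplicative guarantee.

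The entire new content of the lemma is evenness on $A$ (with degree at most $8$), within the same budget $2c(\bar x)$ and without breaking $\deg_F(w)\le B_w+3$ for $w\notin A$. Your proposal does not prove this. Step~2 and the ``main obstacle'' paragraph state what would have to hold, and each suggested route fails as written:
\begin{itemize}
\item Post-hoc copies cannot be charged. The Lau--Singh analysis pays $c_e\le 2c_ex_e$ for every rounded edge against a drop of at least $c_ex_e$ in the residual LP value, and holds nothing in reserve. Slack in the degree constraint of $v\in A$ is not LP cost; dropping that constraint only lowers the residual LP value.
\item An added copy of $vw$ also raises $\deg_F(w)$. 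For $w\notin A$, the second bullet bounds $F$ itself, and the Lau--Singh tree may already meet that bound with equality. For $w\in A$, one copy flips the parity of $w$.
\item Your refined rule is not well posed for the additive scheme. A fractional edge at $v\in A$ is rounded only after $v$ has left $W$, when $x(\delta(v))\le 1$ is no longer imposed, so nothing forces a unique heavy edge. Moreover, doubling one edge at $v$ while taking the others singly leaves the parity of $\deg(v)$ arbitrary.
\item The fallback uses $x(\delta(v))\le x(\delta(S))$, which is not a hypothesis here: $\bar x$ is only assumed feasible for \eqref{eq:bdst}.
\end{itemize}

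What is missing is a mechanism that enforces parity on $A$ inside the iterative relaxation itself. That means modified rules for the vertices of $A$, together with an extreme-point argument that the procedure never gets stuck, such that the total cost of $F$, including any extra copies, is still charged within $2c(\bar x)$. The shape of the guarantees (degree up to $8=2\cdot 4$ in $F$ but at most $4$ in any minimal subset, hence the separate fourth bullet) suggests that the redundant edges are parallel copies at $A$-vertices. Making such copies affordable is exactly the step that needs proof. As written, your proposal establishes only the Lau--Singh bounds for $F^m$, not the lemma.
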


\paragraph*{Bounded-degree $T$-joins.}
Given a graph $G=(V, E)$, edge costs $\{c_e\}_{e \in E}$, \emph{degree bounds} $\{b_v\}_{v \in V}$, and a vertex subset $T \subseteq V$ of even cardinality, the \emph{bounded-degree $T$-join problem} asks for a minimum-cost edge set $J \subseteq E$ such that, for all $v \in V$, $\deg_J(v)$ is odd if and only if $v \in T$, and $\deg_J(v) \le b_v$.

Following is an LP relaxation for the problem.
\begin{equation}\label{eq:bd-t-join}
\begin{array}{ll@{}ll}
\text{minimize} \quad & \displaystyle \sum_{e \in E} c_e x_e \\
\text{subject to} \quad & x(\delta(v)) \le b_v, & \quad &  \forall v \in V, \\
  & x(\delta(S)) \ge 1, & &\forall S \subsetneq V,\; |S \cap T| \text{ odd}, \\
 &x_e \ge 0, & & \forall e \in E.
\end{array}
\end{equation}

We say a cut $S \subsetneq V$ is  \emph{$T$-odd} if $|S \cap T|$ is odd.
The next lemma establishes the integrality of this relaxation; see~\cite[Theorem~36.8]{combopt-book} and~\cite[Theorem~8]{bdtsp}.
\begin{lemma}[Bounded-degree $T$-join integrality]\label{lem:bdtjoin_integrality}
Suppose that $b_v$ is odd if and only if $v \in T$. Then,~\eqref{eq:bd-t-join} is an integral LP, and we can find in polynomial time a minimum-cost bounded-degree $T$-join.
\end{lemma}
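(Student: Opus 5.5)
The plan is to deduce this from Edmonds' perfect $b$-matching polytope theorem, which is essentially the content of \cite[Theorem~36.8]{combopt-book}. I would first record the parity identity that makes the reduction work. Let $J$ be an integral solution and suppose $\deg_J(v)\equiv b_v \pmod 2$ for every $v$. Then for any $S$,
\[
b(S)\equiv \sum_{v\in S}\deg_J(v) = 2|J\cap E[S]| + |J\cap\delta(S)| \pmod 2 .
\]
Since $b_v$ is odd exactly on $T$, $b(S)\equiv |S\cap T|$. So the $T$-odd cuts are exactly the sets $S$ with $b(S)$ odd.

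Next I would build an auxiliary graph $G'$ in which an integral \emph{perfect} $b'$-matching is the same thing as a feasible bounded-degree $T$-join. At each $v$ I attach a zero-cost slack gadget that absorbs exactly $b_v-\deg_J(v)$ units of degree and can only absorb an even amount. One option is a new vertex $w_v$ with $b'(w_v)$ chosen so that $w_v$'s only neighbour is $v$, plus a loop at $w_v$. This forces $x(vw_v)\equiv b'(w_v) \pmod 2$, so choosing $b'(w_v)$ suitably makes the slack even. In $G'$ I require $x'(\delta'(v))=b_v$ for every $v\in V$. An integral perfect $b'$-matching restricted to $E$ is then exactly an edge multiset with $\deg_J(v)\le b_v$ and $\deg_J(v)\equiv b_v$, which is a bounded-degree $T$-join; conversely, every such $J$ extends to one. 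Costs agree because the gadget edges cost $0$.

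Then I would compare polytopes. Edmonds' description of the perfect $b'$-matching polytope consists of $x'\ge 0$, the degree equalities, and $x'(E'[U])\le\lfloor b'(U)/2\rfloor$ for all $U$. Projecting to $E$ and eliminating the slack variables through the degree equalities gives
\[
2x(E[U]) = b(U) - x(\delta(U)) - (\text{slack at } U).
\]
For $U\subseteq V$ with $b(U)$ odd, the odd-set constraint becomes $x(\delta(U))\ge 1$ once the slack terms are accounted for. By the parity identity, this is precisely the $T$-odd cut constraint of~\eqref{eq:bd-t-join}. Degree inequalities $x(\delta(v))\le b_v$ correspond to nonnegativity of the slack. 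Odd sets that contain gadget vertices should reduce to such constraints, or to trivially implied ones. With this in hand, every feasible $x$ for~\eqref{eq:bd-t-join} lifts to a point of the perfect $b'$-matching polytope, hence is a convex combination of integral perfect $b'$-matchings, whose projections are bounded-degree $T$-joins. Since $c\ge 0$, the LP optimum is attained at a $T$-join, which gives integrality. For the algorithm, I would use Edmonds' polynomial-time weighted $b$-matching algorithm on $G'$, or alternatively the ellipsoid method with the Padberg--Rao odd-cut separation routine.

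The main obstacle is the precise gadget together with the lifting step. The slack must be forced even, and for every fractional $x$ feasible for~\eqref{eq:bd-t-join} I must exhibit slack values that satisfy \emph{all} of Edmonds' odd-set inequalities, including those whose $U$ contains some but not all gadget vertices. I would try to avoid this case analysis by instead citing the known description of the capacitated $b$-join polytope directly, which is how \cite[Theorem~8]{bdtsp} handles it. If that citation is not available, I would verify the mixed odd sets by showing that removing or adding a gadget vertex to $U$ changes both sides of the inequality consistently.
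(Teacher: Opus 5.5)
Your argument is correct, but it differs from the paper, which does not prove this lemma at all. The paper simply cites \cite[Theorem~36.8]{combopt-book} and \cite[Theorem~8]{bdtsp}, i.e., it takes the relevant polyhedral description as known, which is your stated fallback. You instead derive the lemma from Edmonds' perfect $b$-matching polytope theorem via a slack gadget. This is self-contained modulo a classical result, and it makes visible that the hypothesis ``$b_v$ odd iff $v\in T$'' is exactly what turns Edmonds' $b$-odd sets into $T$-odd cuts. It also gives the stronger fact that the feasible region of~\eqref{eq:bd-t-join} is the convex hull of bounded-degree $T$-joins viewed as multisets. Since the LP has no constraint $x_e\le 1$, it is $c\ge 0$ that lets you drop parallel pairs and obtain an edge set, as you use. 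The citation is shorter but rests on a heavier theorem.

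The step you flag as the main obstacle is in fact routine, and the gadget can be simplified.

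\emph{Simplest fix.} Put the loop at $v$ itself, counted twice in its degree, and keep $b'=b$. Your gadget already uses a loop, so this needs no stronger form of Edmonds' theorem. The loop value $(b_v-x(\delta(v)))/2\ge 0$ is exactly the degree bound. Via the degree equations, Edmonds' inequality for $U$ with $b(U)$ odd reads literally $x(\delta(U))\ge 1$. So the projection is exactly~\eqref{eq:bd-t-join}, and there are no gadget vertices or mixed sets.

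\emph{Keeping your pendant vertices $w_v$.} Take $b'(w_v)$ even and at least $b_v$. Evenness gives $b'(U)\equiv b(U\cap V)$, and $b'(w_v)\ge b_v$ keeps the loop variable at $w_v$ nonnegative for every fractional $x$. With $b'(w_v)=b_v-1$ for odd $b_v$, the lift would additionally need the singleton cut $x(\delta(v))\ge 1$. Then write Edmonds' inequalities in the equivalent cut form $x'(\delta'(U))\ge 1$.
\begin{itemize}
\item If $b'(U)$ is odd, then $U_0:=U\cap V$ is $T$-odd, hence nonempty and proper because $|T|$ is even.
\item Every edge of $\delta'(U)\setminus\delta_G(U_0)$ is a gadget edge $vw_v$ with nonnegative value.
\item Hence $x'(\delta'(U))\ge x(\delta_G(U_0))\ge 1$, with no case analysis.
\end{itemize}

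\emph{A small correction.} For $U\subseteq V$, Edmonds' inequality becomes $x(\delta(U))+\sum_{v\in U}x'(vw_v)\ge 1$. This is implied by, but not ``precisely'', the $T$-odd cut constraint. Only that implication is needed for the lifting.
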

In particular, this lemma implies that the value of any feasible solution to \eqref{eq:bd-t-join} serves as an upper bound on the minimum cost of a bounded-degree $T$-join.

We conclude this section with the following lemma that is implicit in the proof of Theorem~3.4 in~\cite{pathtsp}.

\begin{lemma}[Path cut-feasibility]\label{lem:cut-feasibility-path}
Let $x$ be a feasible solution to~\eqref{eq:bd-path-tsp}
and  $F$ be a spanning tree of $G$.
Define $T \coloneqq \odd(F) \triangle \{s,t\}$.
Then, $y \coloneqq \frac{1}{3} x + \frac{1}{3} \chi_F$ satisfies $y(\delta(S)) \ge 1$ for all $T$-odd cut~$S$.
\end{lemma}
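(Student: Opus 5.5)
We must show: for $x$ feasible for \eqref{eq:bd-path-tsp}, a spanning tree $F$, and $T=\odd(F)\triangle\{s,t\}$, every $T$-odd cut $S$ satisfies $y(\delta(S))\ge 1$, where $y=\frac13 x+\frac13\chi_F$. The plan is a case analysis on whether $S$ separates $s$ from $t$, using a parity argument on $F$ in one case and the LP cut constraints in the other. Since $T$ and $\delta(S)$ are unchanged by replacing $S$ with $V\setminus S$, and $|T|$ is even (both $\odd(F)$ and $\{s,t\}$ have even size), we may assume $S\neq\emptyset$, $S\neq V$ and work with either side.

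\textbf{Key parity fact.} In any graph, $|\delta_F(S)|\equiv \sum_{v\in S}\deg_F(v) \equiv |S\cap\odd(F)| \pmod 2$. Moreover,
\[
|S\cap T| \equiv |S\cap \odd(F)| + |S\cap\{s,t\}| \pmod 2 .
\]
Hence if $S$ is $T$-odd, then $|\delta_F(S)| \equiv 1+|S\cap\{s,t\}| \pmod 2$.

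\textbf{Case 1: $|S\cap\{s,t\}|=1$.} Then $|\delta_F(S)|$ is even. Since $F$ is a spanning tree, $|\delta_F(S)|\ge1$, so $|\delta_F(S)|\ge 2$. The LP gives only $x(\delta(S))\ge 1$ for such cuts. Therefore
\[
y(\delta(S))\ge \tfrac13\cdot 1+\tfrac13\cdot 2=1 .
\]

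\textbf{Case 2: $|S\cap\{s,t\}|\neq 1$.} The LP gives $x(\delta(S))\ge 2$, and connectivity of $F$ gives $|\delta_F(S)|\ge1$. Therefore
\[
y(\delta(S))\ge \tfrac23+\tfrac13=1 .
\]
Notice that parity is not even needed here.

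The only delicate step is Case 1, where the weak LP constraint $x(\delta(S))\ge 1$ must be compensated by showing that the tree crosses the cut at least twice. The parity computation above delivers exactly this. The cases are exhaustive, so the claim follows.
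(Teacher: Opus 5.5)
Your proof is correct and is essentially the standard argument the paper defers to (it cites it as implicit in An--Kleinberg--Shmoys, Theorem~3.4): for $s$-$t$ cuts the handshaking parity forces $|\delta_F(S)|\ge 2$, which makes up for the weak LP bound $x(\delta(S))\ge 1$, and for all other cuts $x(\delta(S))\ge 2$ together with connectivity of $F$ suffices. The paper's own Cases~1--2 in the proof of Lemma~\ref{lem:bdstspp-join-integral} use exactly this parity reasoning.
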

\section{Bounded-Degree Traveling Salesman Problems}\label{sec:tsp}

In this section, we present our results for the bounded-degree (all-vertices) traveling salesman problems.

\subsection{Existence of a Low-Degree Spanning Tree}\label{sec:tree}

The following lemma extracts a low-degree spanning tree from a feasible bounded-degree circuit or path.

\begin{lemma}\label{lem:euler-to-tree}
Let $H$ be a connected multigraph in which either every vertex has even degree, or exactly two vertices $s$ and $t$ ($s \ne t$) have odd degree.
Then there exists a spanning tree $F$ of $H$ such that
\[
    \deg_F(v) \le \left\lfloor \frac{\deg_H(v)}{2} \right\rfloor + 1 \quad \text{for all } v \in V.
\]
\end{lemma}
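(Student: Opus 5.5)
The plan is to reduce the statement to the Eulerian structure of $H$ and then build $F$ by a DFS-like traversal along an Euler walk. First handle the path case by reduction to the circuit case. If exactly $s$ and $t$ have odd degree, add an auxiliary edge $e^*=\{s,t\}$ to obtain $H'=H+e^*$, in which every vertex has even degree. We will want a spanning tree of $H'$ avoiding $e^*$, with degree bounds measured against $\deg_H$. This suggests proving a slightly stronger circuit statement directly, so that both cases are handled uniformly by one traversal argument.

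The construction is as follows. Fix an Eulerian circuit $W$ of $H$ in the even case, or an Eulerian $s$-$t$ trail of $H$ in the path case, starting at $s$. Walk along $W$ and keep an edge $\{u,w\}$ (traversed from $u$ to $w$) in $F$ exactly when $w$ is visited for the first time. This is the ``first-entry'' tree. Every vertex other than the start is entered for the first time exactly once, so $F$ has $|V|-1$ edges and connects each new vertex to an already visited one. Hence $F$ is a spanning tree.

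For the degree bound, count the edges of $F$ at $v$ as follows:
\begin{itemize}
\item one edge by which $v$ was first entered (none if $v$ is the start vertex);
\item edges by which the walk left $v$ and entered a new vertex, each of which corresponds to a distinct departure of $W$ from $v$.
\end{itemize}
The walk visits $v$ some number of times. In the circuit case each visit uses one arrival and one departure, so $v$ has $\deg_H(v)/2$ departures. The first departure after the first entry might lead to a new vertex, and so might later ones. This gives $\deg_F(v)\le 1+\deg_H(v)/2$ for non-start vertices. For the start vertex, the count is departures only, again $\deg_H(v)/2$. In the path case, the non-endpoint vertices behave the same way. The start $s$ has $(\deg_H(s)+1)/2$ departures and no entry edge, so $\deg_F(s)\le\lfloor\deg_H(s)/2\rfloor+1$. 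The end $t$ has $\lfloor\deg_H(t)/2\rfloor$ departures plus one entry edge.

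The main obstacle, and the step to check carefully, is the departure count per vertex: that the number of departures from $v$ along $W$ is exactly $\lfloor \deg_H(v)/2\rfloor$, or $\lceil\deg_H(v)/2\rceil$ for the start vertex. This needs to be verified for loops and multi-edges, and, in the circuit case, for the convention that the closing arrival at the start is not a departure. Once this is confirmed, combining it with the single first-entry edge gives the claimed bound $\lfloor \deg_H(v)/2\rfloor+1$ in all cases, and no earlier lemma is needed.
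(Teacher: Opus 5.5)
Your proof is correct and is essentially the paper's argument. Both build the first-entry tree along an Euler walk and bound $\deg_F(v)$ by one entry edge plus the number of departures from $v$, namely $\lfloor \deg_H(v)/2\rfloor$, or $\lceil \deg_H(s)/2\rceil$ at the start $s$ of an $s$-$t$ trail. The only difference is cosmetic: the paper reduces the all-even case to the two-odd case by deleting one edge, whereas you handle the Eulerian circuit directly (the start vertex then has only $\deg_H(v)/2$ departures), and your opening auxiliary-edge reduction is never actually used.
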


\begin{proof}
It suffices to consider the case where $H$ has exactly two odd-degree vertices $s$ and $t$: otherwise, the graph has an Eulerian circuit and we can delete an arbitrary edge to obtain a connected multigraph with exactly two odd-degree vertices. Note that any spanning tree of this modified graph is a spanning tree of the original multigraph $H$ as well, also satisfying the desired degree bounds.

Fix an Eulerian $s$-$t$ walk of $H$. We will construct a (rooted) spanning tree $F$ by traversing this walk and adding some of the edges on the walk to the tree. Let $s$ be the root node; initially, it is the only node of the partial tree. For each edge as we traverse the walk, if we reach a previously unvisited vertex for the first time, we add this edge to the rooted tree, adding the unvisited vertex as a leaf node. Note that we have a spanning tree by the end of this procedure.

Consider $\deg_F(v)$ for each vertex $v$.
Let us first consider the case where $v = s$. Note that the walk ``leaves'' $s$ exactly $\left\lceil \frac{\deg_H(v)}{2} \right\rceil$ times, and each of these departures can create at most one child of $s$ in $F$.
Hence $\deg_F(s) \le \left\lceil \frac{\deg_H(v)}{2} \right\rceil = \left\lfloor \frac{\deg_H(v)}{2} \right\rfloor + 1$.

Now consider $v \ne s$.
When the walk reaches $v$ for the first time, the entering edge connects $v$ to its parent in the tree.
After that, a new child of $v$ can be created only when the walk departs from $v$.
The number of departures from $v$ is $\left\lfloor \frac{\deg_H(v)}{2} \right\rfloor$, both when $v = t$ and when $v \notin \{s,t\}$.
Therefore $\deg_F(v) \le \left\lfloor \frac{\deg_H(v)}{2} \right\rfloor + 1$.
\end{proof}

The following lemma is immediate.
\begingroup
\renewcommand{\thelemma}{\ref{lem:informal-tree}}
\begin{lemma}[Rephrased]\label{lem:feasible-to-tree}
Given an instance $(G = (V, E), \{c_e\}_{e \in E}, \{b_v\}_{v \in V})$ of BDTSP or BDTSPP (with $s$ and $t$ additionally specified for BDTSPP), let $\OPT$ denote the optimal cost.
Then there exists a spanning tree $F \subseteq E$ with cost at most $\OPT$ such that every vertex $v$ has degree at most $\lfloor \frac{b_v}{2} \rfloor + 1$.
\end{lemma}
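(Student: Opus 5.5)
We derive the statement directly from Lemma~\ref{lem:euler-to-tree}, applied to an optimal solution. Assume the instance is feasible; otherwise $\OPT=\infty$ and the claim is vacuous. Let $H$ be an optimal solution, a connected spanning subgraph of $2G$ of cost $\OPT$. By Problem~\ref{prob:bdtsp}, every vertex of $H$ has even degree. By Problem~\ref{prob:bdpath}, exactly $s$ and $t$ have odd degree, with $s\neq t$. Either way $H$ meets the hypotheses of Lemma~\ref{lem:euler-to-tree}, so there is a spanning tree $F$ of $H$ with $\deg_F(v)\le\lfloor \deg_H(v)/2\rfloor+1$ for every $v\in V$.

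Next we turn $F$ into an edge set of $G$ and bound its cost and degrees.

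\textbf{Edges.} $F$ is a tree, so it contains no two parallel copies of the same edge; otherwise those copies would form a cycle. Hence $F$ uses each edge of $E$ at most once, and we may regard it as a subset $F\subseteq E$ that forms a spanning tree of $G$.

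\textbf{Cost.} Since $F\subseteq E(H)$ as a multiset and all costs are nonnegative, $c(F)\le c(H)=\OPT$.

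\textbf{Degrees.} We have $\deg_H(v)\le b_v$, and $x\mapsto\lfloor x/2\rfloor$ is monotone, so $\deg_F(v)\le\lfloor b_v/2\rfloor+1$ for every $v$.

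There is no substantial obstacle. The only point needing care is that a tree extracted from a multigraph has no repeated edges, so it is a genuine edge subset of $G$ and not a multiset.
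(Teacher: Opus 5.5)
Your proposal is correct and matches the paper, which calls this lemma ``immediate'' from Lemma~\ref{lem:euler-to-tree} applied to an optimal solution $H$, exactly as you do. Your added details (a tree in $2G$ uses no edge twice, so $F \subseteq E$; nonnegative costs give $c(F)\le \OPT$; and $\lfloor\cdot/2\rfloor$ is monotone in $\deg_H(v)\le b_v$) are just what that one-line argument leaves implicit.
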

\addtocounter{lemma}{-1}
\endgroup

\subsection{Bounded-Degree Traveling Salesman Problem (BDTSP)}\label{sec:bdtsp}

Let $(G = (V, E), \{c_e\}_{e \in E}, \{b_v\}_{v \in V})$ be the instance.
Let $\mathbf{1}_{p}$ denote the indicator function for $p$.

\paragraph*{Algorithm description.}
Set the degree bounds to $B_v:=\frac{b_v}{2} + 1$ and apply \Cref{lem:bdst} to obtain a spanning tree $F$.
Let $T \coloneqq \odd(F)$.
For each $v \in V$, let $b'_v$ be the smallest integer satisfying $b'_v \ge \frac{b_v}{2}$ and $b'_v \equiv \mathbf{1}_{v \in T} \pmod{2}$.
Find a minimum-cost bounded-degree $T$-join $J$ with degree bounds $b'_v$ using \Cref{lem:bdtjoin_integrality} and output $H \coloneqq F \uplus J$.

\paragraph*{Algorithm analysis.}
\begingroup
\renewcommand{\thetheorem}{\ref{thm:bdtsp}}
\begin{theorem}
There exists a $(3/2,+2)$-approximation algorithm for the bounded-degree traveling salesman problem.
\end{theorem}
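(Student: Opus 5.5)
We analyse the algorithm just described, proving three things in turn: it is well defined (the tree and $T$-join steps succeed), $c(H)\le \frac32\OPT$, and $\deg_H(v)\le b_v+2$ for every $v$.

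\emph{Feasibility and cost of $F$.} By \Cref{lem:feasible-to-tree}, the bounded-degree spanning tree instance with bounds $B_v=\frac{b_v}{2}+1$ is feasible. Its optimum is therefore at most $\OPT$. Applying \Cref{lem:bdst} then gives a spanning tree $F$ with $c(F)\le\OPT$ and $\deg_F(v)\le \frac{b_v}{2}+2$. Because $H=F\uplus J$ contains a spanning tree and $\odd(J)=T=\odd(F)$, $H$ is connected and every vertex has even degree. Hence $H$ is a valid circuit solution (a subgraph of $2G$) once degrees are controlled.

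\emph{Cost of $J$.} Let $x^*$ be an optimal solution of~\eqref{eq:bd-tsp}, so $c(x^*)\le\OPT$, and consider $y=\frac12x^*$. I will show $y$ is feasible for~\eqref{eq:bd-t-join} with bounds $b'_v$. For every cut, $y(\delta(S))\ge 1$, since $x^*(\delta(S))\ge 2$. For every vertex, $y(\delta(v))\le \frac{b_v}{2}\le b'_v$. By construction $b'_v$ is odd iff $v\in T$, so \Cref{lem:bdtjoin_integrality} applies. It gives $c(J)\le c(y)\le \frac12\OPT$, and together with $c(F)\le\OPT$ this yields $c(H)\le\frac32\OPT$.

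\emph{Degree bound.} Since $b_v$ is even, $b'_v\le \frac{b_v}{2}+1$, with $b'_v=\frac{b_v}{2}+1$ only when $\frac{b_v}{2}$ has the wrong parity. The naive estimate is $\deg_H(v)\le(\frac{b_v}{2}+2)+(\frac{b_v}{2}+1)=b_v+3$. To remove the extra $1$, I use parity: $\deg_H(v)$ is even and $b_v+3$ is odd, so in fact $\deg_H(v)\le b_v+2$. This parity step is the delicate point, since it is what upgrades the naive $+3$ to the claimed $+2$. It only needs $\deg_F(v)+\deg_J(v)$ to be even, which holds because $\deg_J(v)\equiv \deg_F(v)$ by the $T$-join property.

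Finally, the algorithm runs in polynomial time by \Cref{lem:bdst} and \Cref{lem:bdtjoin_integrality}. If \Cref{lem:bdst} fails to return a tree, \Cref{lem:feasible-to-tree} shows the original instance is infeasible, so the algorithm may correctly report infeasibility.
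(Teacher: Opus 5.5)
Your proposal is correct and follows the paper's proof essentially step for step. \Cref{lem:feasible-to-tree} with \Cref{lem:bdst} gives $F$ with $c(F)\le\OPT$ and $\deg_F(v)\le b_v/2+2$; the vector $x^*/2$ certifies, via \Cref{lem:bdtjoin_integrality}, that $c(J)\le\OPT/2$ and $\deg_J(v)\le b_v/2+1$; and the parity of $\deg_H(v)$ versus $b_v$ sharpens the naive bound $b_v+3$ to $b_v+2$. Your explicit handling of infeasible inputs is a harmless addition to the paper's remark that only feasible instances need be considered.
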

\addtocounter{theorem}{-1}
\endgroup

\begin{proof}
It suffices to show the algorithm's correctness only for the case where the input instance is feasible.
From \Cref{lem:feasible-to-tree}, \Cref{lem:bdst} returns a spanning tree $F$ with $c(F) \le \OPT$ and $\deg_F(v) \le \frac{b_v}{2} + 2$ for all $v \in V$.

Let $x^*$ be an optimal solution to~\eqref{eq:bd-tsp}.
The vector $\frac{x^*}{2}$ satisfies all cut constraints of~\eqref{eq:bd-t-join}, since every cut $S \subsetneq V$ ($S \neq \emptyset$) satisfies $x^*(\delta(S)) \ge 2$.
The vector also satisfies the degree constraints because $\frac{1}{2} x^*(\delta(v)) \le \frac{b_v}{2} \le b'_v$ for all $v \in V$.
From \Cref{lem:bdtjoin_integrality}, we have $c(J) \le  c(\frac{x^*}{2}) \le \frac{1}{2}\,\OPT$
and $\deg_J(v) \le b'_v \le \frac{b_v}{2} + 1$ for all $v \in V$.

Let $H \coloneqq F \uplus J$.
Then $H$ is connected, spans all vertices, and satisfies $\odd(H) = \odd(F) \triangle T = \emptyset$, which implies that $H$ is  Eulerian.
Its cost is bounded by
\[
c(H) = c(F) + c(J) \le \OPT + \frac{1}{2}\,\OPT = \frac{3}{2}\,\OPT.
\]
Moreover,
\[
\deg_H(v) \le \deg_F(v) + \deg_J(v) \le \left(\frac{b_v}{2} + 2\right) + \left(\frac{b_v}{2} + 1\right) = b_v + 3.
\]
Since $\deg_H(v)$ and $b_v$ have the same parity and $\deg_H(v) \le b_v+3$, we have $\deg_H(v) \le b_v+2$.
\end{proof}

\subsection{Bounded-Degree Traveling Salesman Path Problem (BDTSPP)}\label{sec:bdptsp}

Consider an instance $(G = (V, E), \{c_e\}_{e \in E}, \{b_v\}_{v \in V}), s, t)$ of BDTSPP.

\paragraph*{Algorithm description.}
Set the degree bounds to $B_v:= \left\lfloor \frac{b_v}{2} \right\rfloor + 1$ and apply \Cref{lem:bdst} to obtain a spanning tree $F$.
Let $T \coloneqq \odd(F) \triangle \{s,t\}$.
For each $v \in V$, let $b'_v$ be the smallest integer satisfying $b'_v \ge \frac{b_v}{2} + \frac{2}{3}$ and $b'_v \equiv \mathbf{1}_{v \in T} \pmod{2}$.
Finally, find a minimum-cost bounded-degree $T$-join $J$ with degree bounds $b'_v$ using \Cref{lem:bdtjoin_integrality} and output $H := F \uplus J$.

The analysis of this algorithm proceeds similarly to the proof of Theorem~\ref{thm:bdtsp}, except that it uses Lemma~\ref{lem:cut-feasibility-path}. We defer the details to Appendix~\ref{app:534}.

\subsection{Integrality Gap Instances}\label{sec:path-gap}
The details of our integrality gap instance showing Observation~\ref{obs:int_gap} are deferred to Appendix~\ref{app:intgap}. This gap also extends to the subset problem by taking all vertices as terminals.

\begingroup
\renewcommand{\theobservation}{\ref{obs:int_gap}}
\begin{observation}
For every $\epsilon > 0$, there exists an instance of the bounded-degree traveling salesman path problem such that
\begin{itemize}
\item the cost of every (integral) solution, without any degree restrictions, is no smaller than $(3/2-\epsilon)$ times the LP optimum, and
\item every (integral) solution, without any cost restrictions, has a vertex $v$ whose degree is at least $b_v+2$.
\end{itemize}
\end{observation}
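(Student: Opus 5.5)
The plan is to build one instance by gluing two gadgets in series at a cut vertex. The first gadget witnesses the cost gap and has no effective degree bounds. The second gadget has zero cost and witnesses the degree violation.

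\textbf{Serial composition.} Take a BDTSPP instance $I_1$ with endpoints $s,u$ and an instance $I_2$ with endpoints $u,t$ on disjoint vertex sets except $u$. Identify $u$ and set $b_u := b^{(1)}_u + b^{(2)}_u$, which is even as required. I will verify two directions.
\begin{itemize}
\item Feasible LP solutions of~\eqref{eq:bd-path-tsp} for the two parts concatenate to a feasible LP solution of the glued instance. For a cut $S$, look at $S$ restricted to each side. If the cut is nontrivial on one side, that side alone already supplies the required value. The only delicate case is when $S$ separates $s$ from $t$ but is trivial on both sides; then $u$'s side determines which part supplies a cut of value $\ge 1$.
\item Conversely, an integral solution $H$ of the glued instance splits at the cut vertex $u$ into $H_1 \uplus H_2$. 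Each piece is connected and spanning on its side. A parity count shows $\deg_{H_1}(v)$ is odd exactly at $s,u$, and $\deg_{H_2}(v)$ is odd exactly at $u,t$: the total degree of $H_1$ is even and all its vertices other than $s,u$ are even. So each $H_i$ is an integral solution of the corresponding path instance.
\end{itemize}
Hence costs add, LP optima add up to at most the sum, and any degree violation forced in $I_2$ survives; at $u$ itself there is slack to spare.

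\textbf{Cost gadget $I_1$.} I would use the known $3/2$ integrality-gap family for the path-TSP LP (the instance implicit in~\cite{pathtsp}, with $s$ and $t$ at the ends of a long ``ladder''-type metric). I take all degree bounds huge, e.g.\ $b_v$ equal to twice the number of edges, with the correct parity, so that the degree constraints are vacuous in both the LP and integral solutions. The LP value is then $\approx k$ while every tour costs $\approx \tfrac32 k$. For $I_2$ I give every edge cost $0$, so the overall ratio is unchanged and tends to $3/2$.

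\textbf{Degree gadget $I_2$.} Take the Petersen graph $P$ and a vertex $z\in P$. Add new vertices $u$ and $t$, both adjacent to $z$ only. Set $b_u=b_t=1$ within this gadget, $b_z=4$, and $b_v=2$ for the other nine Petersen vertices.
\begin{itemize}
\item \emph{LP feasibility.} Put $x=1$ on $uz$ and $tz$, and $x=2/3$ on every Petersen edge. Every degree is then exactly at its bound: $z$ has $2+2=4$, the others have $2$. For cuts, $P$ is $3$-edge-connected, so any cut that splits $P$ gets at least $3\cdot\tfrac23=2$. Cuts that do not split $P$ isolate $u$ and/or $t$, and these get exactly what they need: $1$ if $|S\cap\{u,t\}|=1$, and $2$ if $S=\{u,t\}$.
\item \emph{Forced violation.} Suppose an integral solution respected all bounds. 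Then $u$ and $t$ would be leaves attached to $z$. The remaining part on $P$ would be connected and spanning with all degrees even and at most $2$, i.e.\ a Hamiltonian cycle of $P$, which does not exist. So some vertex exceeds its bound, and by parity it exceeds it by at least $2$.
\end{itemize}

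\textbf{Main obstacle.} I expect the hardest step to be pinning down and verifying the cost-gap family $I_1$ precisely, i.e.\ showing that every integral $s$–$u$ Eulerian subgraph costs at least $(3/2-\epsilon)$ times an explicit LP solution. If recalling it is shaky, I would fall back on the standard ``two long parallel paths with rungs'' construction. There I would argue that any Eulerian path must traverse one of the two paths twice, up to $O(1)$ rungs, while $x=1/2$ on the paths plus $1$ on the rungs gives a fractional solution of cost $\approx k$.
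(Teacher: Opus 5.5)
The genuine gap is the cost gadget $I_1$, which carries the entire first bullet. You never fix a concrete instance, and the fallback you sketch fails as stated. With $x=\tfrac12$ on the path edges, any interior path vertex not incident to a rung has $x(\delta(v))=1<2$. That violates the constraint for the non-$s$-$t$ cut $\{v\}$. If you repair this by giving every path vertex a rung with $x=1$, a zigzag walk up the ladder costs about as much as that fractional solution, so no gap is exhibited.

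What works is $x=1$ on two long paths joined only at their ends, i.e.\ the cycle $C_{2n}$ with unit costs and antipodal $s,t$. Here $x\equiv 1$ is feasible with cost $2n$, while every integral solution costs at least $3n-2$. The reason: the edges of odd multiplicity form an $\{s,t\}$-join of the cycle, hence one of its two $s$-$t$ sides, each edge used once. The other side has only multiplicities $0$ or $2$, and connectivity allows at most one of its edges to be unused.

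This cycle is exactly the paper's instance, and with $b_s=b_t=3$ and $b_v=2$ elsewhere it already witnesses both bullets. The vector $x\equiv 1$ has $x(\delta(v))=2$ at every vertex, so it respects all bounds. In any integral solution, the doubled side contains, for $n\ge 4$, an interior vertex whose two incident edges are both doubled, giving degree $4=b_v+2$. So your serial composition and Petersen gadget, though essentially correct, are unnecessary here. The modular route would only pay off if you wanted to combine an arbitrary cost-gap family with an arbitrary degree-gap gadget.

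If you keep the gluing, tighten two points. First, the ``delicate case'' you name cannot occur: both parts contain $u$, so a proper nonempty $S$ is nontrivial on at least one side. The case that genuinely needs both parts is $u$ on one side of $S$ and $s,t$ on the other, where each part contributes only $\ge 1$ and you need their sum. Second, because $b_u$ is huge in the glued instance, slack at $u$ is a danger rather than a comfort. You must check that the gadget's violation cannot be shifted onto $u$. This holds because $uz$ is $u$'s only gadget edge and is usable at most twice, so the odd degree $\deg_{H_2}(u)$ is exactly $1$.
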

\addtocounter{observation}{-1}
\endgroup

\section{Bounded-Degree Subset Traveling Salesman Problems}\label{sec:subsettsp}

\subsection{Bounded-Degree Subset Traveling Salesman Problem (BDSTSP)}\label{sec:subset}
The analysis of our $(14/9,+6)$-approximation algorithm for BDSTSP uses ingredients already present in the analysis of our algorithm for the path version. We defer the detailed presentation of our algorithm for BDSTSP to Appendix~\ref{app:subset:circuit}.

\subsection{Bounded-Degree Subset Traveling Salesman Path Problem (BDSTSPP)}\label{sec:subset-path}

Consider an instance $(G = (V, E), \{c_e \}_{e \in E}, \{b_v\}_{v \in V}, X, s, t)$ of BDSTSPP.
We use the following LP relaxation.
\begin{equation}\label{eq:bd-subset-path-tsp}
\hspace{-1.7ex}\begin{array}{ll@{}ll}
\text{minimize} & \displaystyle \sum_{e \in E} c_e x_e \\
\text{subject to}
& x(\delta(S)) \ge 1, && \forall S \subseteq V,\; S \cap X \ne \emptyset,\; (V \setminus S) \cap X \ne \emptyset,\; |S \cap \{s,t\}| = 1, \\
& x(\delta(S)) \ge 2, && \forall S \subseteq V,\; S \cap X \ne \emptyset,\; (V \setminus S) \cap X \ne \emptyset,\; |S \cap \{s,t\}| \ne 1, \\
& x(\delta(v)) \le b_v ,&& \forall v \in V, \\
& x(\delta(v)) \le x(\delta(S)), && \forall S \subseteq V \setminus X,\; S \ne \emptyset,\; \forall v \in S, \\
& 0 \le x_e \le 2, && \forall e \in E.
\end{array}
\end{equation}

The validity of the fourth set of constraints is verified by a straightforward adaptation of the proof of \cite[Lemma~9]{bdtsp} to the path setting: if an integral solution violates this constraint for some $v$ and $S$, the pigeonhole principle implies that one can find a cycle containing $v$ that is completely within $S$, which can be deleted without losing optimality. This shows that \eqref{eq:bd-subset-path-tsp}  indeed is a relaxation.

\subsubsection{A \texorpdfstring{$(11/5, +6)$}{(11/5, +6)}-Approximation Algorithm}\label{sec:511}

\paragraph*{Algorithm description.}
Solve~\eqref{eq:bd-subset-path-tsp} to obtain an optimal solution $x^*$.
Let $A := \{v \in V \setminus X : x^*(\delta(v)) < 2\}$, and let
\[
B_v := \begin{cases}
    1, & v\in A, \\
    \frac{b_v + 1}{2}, & v \in \{s, t \}, \\
    \frac{b_v}{2}, & v \in V \setminus (A \cup \{s, t\}).
\end{cases}
\]
Add a zero-cost\footnote{Naturally, we will delete this auxiliary edge later; therefore, the choice of its cost is immaterial but setting it to zero simplifies the presentation.} auxiliary edge $e_0 = st$ and let $G_0 :=  (V,\,E \cup \{e_0\})$.
Let
\[
\bar x := \frac{x^* + \chi_{\{e_0\}}}{2}.
\]

Apply \Cref{lem:fm-structured} to the bounded-degree Steiner tree instance $(G_0, \{c_e\}_{e \in E}, \{B_v\}_{v \in V},X)$ with $\bar{x}$ and $A$.
This gives a connected subgraph $F_0 \subseteq 2G_0$ spanning $X$, and an inclusion-wise minimal subgraph
$F_0^m \subseteq F_0$ that is still feasible for the same bounded-degree Steiner tree instance.

Let $F^m$ be the subgraph of the original graph $G$ obtained by deleting all copies of $e_0$ from $F_0^m$.
Define $T := \odd(F^m) \triangle \{s,t \}$.
For each vertex $v$, set the ``residual'' degree bound
\[
r_v := b_v + 6 - \deg_{F^m}(v),
\]
and find a minimum-cost bounded-degree $T$-join $J$ of $G$ with degree bounds $r_v$ using \Cref{lem:bdtjoin_integrality}.

Output $H := F^m \uplus J$, after deleting any connected component that contains no terminal.

\paragraph*{Algorithm analysis.}
Let $F$ be the subgraph of $G$ obtained by deleting all copies of $e_0$ from $F_0$. Note that $F^m \subseteq F$. Let $F^r := F \setminus F^m$ be the multiset of ``remaining'' edges.

Note that one of the following two things happens: either $F^m$ has a connected component containing all terminals (as $F_0^m$ does), or $F^m$ has exactly two components containing terminal vertices: one containing $s$ and the other containing $t$ in particular.
In the latter case, we denote the vertex sets of the two connected components by $C_s$ and $C_t$.

Let $P$ be a shortest simple $s$-$t$ path in $G$. If $F^m$ has a component containing all terminals, set $Q:=\emptyset$.
Otherwise, $P$ is a path from $s\in C_s$ to $t\in C_t$; choose a minimal subpath $Q$ of $P$ with one endpoint in $C_s$ and the other endpoint in $C_t$.
Then no internal vertex of $Q$ lies in $C_s\cup C_t$. We therefore have $\deg_Q(v)\le 1$ for all $v\in C_s\cup C_t$, and $\deg_Q(v)\le2$ for all $v\in V$.
Moreover, $Q$ crosses every $s$-$t$ cut $S$ such that $\delta_{F^m}(S)=\emptyset$ and $c(Q)\le c(P)\le \OPT$ since every BDSTSPP solution contains an $s$-$t$ path as a subgraph.

\begin{lemma}\label{lem:bdstspp-join-integral}
The residual degree bound $r_v$ is odd if and only if $v \in T$, making $(G, \{c_e\}_{e \in E},$ $ \{r_v \}_{v \in V}, T)$ a valid bounded-degree $T$-join instance.
The vector
\[
y := \frac{1}{5} \chi_{F^m} + \frac{2}{5} \chi_{F^r} + \frac{2}{5} x^* + \frac{1}{5} \chi_P + \frac{2}{5}\chi_Q
\]
is a feasible solution to~\eqref{eq:bd-t-join} for this instance.
\end{lemma}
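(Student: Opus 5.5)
The plan is to verify the three parts of the claim in turn: the parity of $r_v$, the degree constraints of~\eqref{eq:bd-t-join} for $y$, and the cut constraints for $y$. The cut constraints need a case analysis.

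\emph{Parity.} Since $b_v$ is odd exactly for $v\in\{s,t\}$, we have $r_v = b_v+6-\deg_{F^m}(v) \equiv \mathbf{1}_{v\in\{s,t\}}+\mathbf{1}_{v\in\odd(F^m)} \equiv \mathbf{1}_{v\in T} \pmod 2$, so \Cref{lem:bdtjoin_integrality} applies once $y$ is shown feasible.

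\emph{Degree constraints.} The inequality $y(\delta(v))\le r_v$ is equivalent to
\[
\tfrac65\deg_{F^m}(v)+\tfrac25\deg_{F^r}(v)+\tfrac25x^*(\delta(v))+\tfrac15\deg_P(v)+\tfrac25\deg_Q(v)\le b_v+6 .
\]
I will rewrite the first two terms as $\tfrac45\deg_{F^m}(v)+\tfrac25\deg_F(v)$ and bound them with \Cref{lem:fm-structured}. Deleting copies of $e_0$ only lowers degrees, so the bounds from \Cref{lem:fm-structured} transfer from $F_0,F_0^m$ to $F,F^m$. I also use $\deg_P\le 2$ and $\deg_Q\le2$, and both drop to at most $1$ at $s$ and $t$.
\begin{itemize}
\item For $v\notin A\cup\{s,t\}$: $\deg_{F^m},\deg_F\le b_v/2+3$ and $x^*(\delta(v))\le b_v$, so the left side is at most $b_v+24/5$.
\item For $v\in\{s,t\}$: $B_v=(b_v+1)/2$, which again gives at most $b_v+24/5$.
\item For $v\in A$: $\deg_{F^m}\le 4$, $\deg_F\le 8$ and $x^*(\delta(v))<2$, giving at most $42/5\le b_v+6$ because $b_v\ge 2$.
\end{itemize}

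\emph{Cut constraints.} Let $S$ be $T$-odd. There are three cases.
\begin{itemize}
\item \textbf{Terminal-separating, $|S\cap\{s,t\}|\ne1$.} Here $x^*(\delta(S))\ge2$ contributes $4/5$. The edge $e_0$ does not cross $S$ and $F_0^m$ connects $X$, so $\delta_{F^m}(S)\ge1$ contributes $1/5$.
\item \textbf{Terminal-separating, $|S\cap\{s,t\}|=1$.} Now $x^*(\delta(S))\ge1$ contributes $2/5$ and $P$ contributes $1/5$. Being $T$-odd with one of $s,t$ inside means $|S\cap\odd(F^m)|$ is even, so $|\delta_{F^m}(S)|$ is even. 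If $|\delta_{F^m}(S)|\ge2$ it contributes $2/5$. If it is $0$, then $Q$ crosses $S$ and contributes $2/5$.
\item \textbf{Not terminal-separating.} By complementing, assume $S\subseteq V\setminus X$. Then $S\cap T=S\cap\odd(F^m)$, so $|\delta_{F^m}(S)|$ is odd. Minimality of $F_0^m$ makes it a tree whose leaves are terminals, and removing $e_0$ keeps that leaf property. Hence every component of $F^m[S]$ that $S$ meets has at least two boundary edges, so $|\delta_{F^m}(S)|\ge3$, contributing $3/5$. If some $v\in S$ has $v\notin A$, the fourth constraint of~\eqref{eq:bd-subset-path-tsp} gives $x^*(\delta(S))\ge x^*(\delta(v))\ge 2$, contributing a further $4/5$. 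Otherwise $S\subseteq A$, so every vertex of $S$ has even degree in $F$ (no $e_0$ is incident to $S$). Then $|\delta_F(S)|$ is even while $|\delta_{F^m}(S)|$ is odd, forcing $|\delta_{F^r}(S)|\ge1$ and a contribution of $2/5$, for a total of at least $1$.
\end{itemize}

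The main obstacle is this last case, where $S$ is a set of Steiner vertices. It needs the structural facts that the leaves of $F^m$ are terminals and that degrees in $A$ are even in $F$. This is exactly where the redundant edges $F^r$ are used, and it is why the coefficient on $F^r$ is $2/5$ rather than $1/5$.
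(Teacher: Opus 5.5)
Your parity argument and your treatment of the cut constraints are sound. Merging the paper's two $s$-$t$-cut cases via the parity of $|\delta_{F^m}(S)|$ is fine. So is deriving $|\delta_{F^m}(S)|\ge 3$ for Steiner-only $S$ from the terminal-leaf structure of the minimal tree $F_0^m$; the paper instead argues by contradiction that one could delete all edges at $S$.

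The gap is in the degree constraint for $v\in A$. With $\deg_{F^m}(v)\le 4$, $\deg_F(v)\le 8$, $x^*(\delta(v))<2$, $\deg_P(v)\le 2$ and $\deg_Q(v)\le 2$, your bound is $\frac{16}{5}+\frac{16}{5}+\frac45+\frac25+\frac45=\frac{42}{5}$. But $\frac{42}{5}\le b_v+6$ does not follow from $b_v\ge 2$. When $b_v=2$, which is allowed for a Steiner vertex, the right-hand side is $8=\frac{40}{5}$. The strict inequality $x^*(\delta(v))<2$ does not rescue this, since $x^*(\delta(v))$ may be arbitrarily close to $2$. Nothing you invoked rules out the other terms attaining their maxima at the same time.

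The missing observation is that $\deg_{F^m}(v)>0$ and $\deg_Q(v)=2$ cannot occur together. Split on $\deg_{F^m}(v)$:

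\begin{itemize}
\item If $\deg_{F^m}(v)=0$, the first term vanishes and the total is below $\frac{16}{5}+\frac45+\frac25+\frac45=\frac{26}{5}$.
\item If $\deg_{F^m}(v)>0$, then by minimality of $F_0^m$ the vertex $v$ lies in a terminal-containing component of $F^m$. Either this is the unique such component, so $Q=\emptyset$, or it is one of $C_s,C_t$. No internal vertex of $Q$ lies in $C_s\cup C_t$, so $\deg_Q(v)\le 1$. The total is then below $\frac{16}{5}+\frac{16}{5}+\frac45+\frac25+\frac25=8\le b_v+6$.
\end{itemize}

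This is exactly the case split in the paper, and with it your proof is complete.
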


\begin{proof}
First, we show that $r_v$ is odd if and only if $v \in T$.
For $v \in \{s,t\}$, the bound $b_v$ is odd, so $r_v=b_v+6-\deg_{F^m}(v)$ is odd if and only if $\deg_{F^m}(v)$ is even.
This is exactly the condition $v\in \odd(F^m)\triangle\{s,t\}=T$.
For $v\notin\{s,t\}$, the bound $b_v$ is even, so $r_v$ is odd if and only if $\deg_{F^m}(v)$ is odd, again exactly when $v\in T$.

Now we verify the cut and degree inequalities for $y$. Consider a \(T\)-odd cut \(\delta(S)\), i.e., \(|S\cap T|\) is odd.
Since \(|T|\) is even, the complementary side \(V\setminus S\) is also \(T\)-odd. Thus, by taking the appropriate side of the cut, we can assume without loss of generality that  either \(\delta(S)\) is terminal-separating or \(S\subseteq V\setminus X\).

\begin{itemize}
    \item Case 1: $S \cap X \neq \emptyset$, $(V \setminus S) \cap X \neq \emptyset$, and $\delta(S)$ is not an $s$-$t$ cut.

    In this case, $x^*(\delta(S)) \ge 2$. We claim that  $F^m$ crosses $\delta(S)$: if $F^m$ connects all terminals, this is immediate.
    Otherwise, in order for $F^m \cap \delta(S)$ to be empty while $S$ is terminal-separating, one of $C_s$ and $C_t$ must be completely inside $S$ and the other must be disjoint from $S$. This, however, contradicts the assumption that $\delta(S)$ is not an $s$-$t$ cut. Hence,
    \[
    y(\delta(S)) \ge \frac{1}{5} \chi_{F^m}(\delta(S)) + \frac{2}{5} x^*(\delta(S)) \ge \frac{1}{5} \cdot 1 + \frac{2} {5} \cdot 2 = 1. 
    \]

    \item Case 2: $S \cap X \neq \emptyset$, $(V \setminus S) \cap X \neq \emptyset$, $\delta(S)$ is an $s$-$t$ cut, and $F^m$ crosses $\delta(S)$.

    We have $x^*(\delta(S)) \ge 1$. Note that $P$ crosses $\delta(S)$.    
    Since $S$ is $T$-odd and $|S\cap\{s,t\}|=1$, $|S\cap \odd(F^m)| \equiv |S\cap T| + |S\cap\{s,t\}| \equiv 0 \pmod 2$.
    By the handshaking lemma, $\chi_{F^m}(\delta(S)) \equiv |S\cap \odd(F^m)| \equiv 0 \pmod 2$.
    Since $\chi_{F^m}(\delta(S))>0$, we have $\chi_{F^m}(\delta(S))\ge2$. Therefore we have
    \[
    y(\delta(S)) \ge \frac{1}{5} \chi_{F^m}(\delta(S)) + \frac{2}{5} x^*(\delta(S)) + \frac{1}{5} \chi_P(\delta(S))  \ge \frac{1}{5} \cdot 2 + \frac{2}{5} \cdot 1 + \frac{1}{5} \cdot 1 = 1.
    \]

    \item Case 3: $S \cap X \neq \emptyset$, $(V \setminus S) \cap X \neq \emptyset$, $\delta(S)$ is an $s$-$t$ cut, and $F^m$ does not cross $\delta(S)$.

    This case can occur only if $F^m$ contains two components.
    In this case, $x^*(\delta(S)) \ge 1$, and $P$ and $Q$ cross $\delta(S)$.
    Hence,
    \[
    y(\delta(S)) \ge \frac{2}{5} x^*(\delta(S)) +  \frac{1}{5} \chi_P(\delta(S)) + \frac{2}{5} \chi_Q(\delta(S)) \ge \frac{2}{5} \cdot 1 + \frac{1}{5} \cdot 1 + \frac{2}{5} \cdot 1 = 1.
    \]

    \item Case 4: $S \subseteq V \setminus X$ and $S \setminus A \neq \emptyset$.

        Let $v \in S \setminus A$.
    Then, $x^*(\delta(S)) \ge x^*(\delta(v)) \ge 2$ due to the fourth constraint of~\eqref{eq:bd-subset-path-tsp}.
    Since $S\cap\{s,t\}=\emptyset$ and $S$ is $T$-odd, we have $|S\cap \odd(F^m)| \equiv |S\cap T| \equiv 1 \pmod 2$.
    By the handshaking lemma, $\chi_{F^m}(\delta(S))$ is odd, and hence
    $\chi_{F^m}(\delta(S))\ge1$.
    Thus,
    \[
    y(\delta(S)) \ge \frac{1}{5} \chi_{F^m}(\delta(S)) +\frac{2}{5} x^*(\delta(S)) \ge \frac{1}{5} \cdot 1 + \frac{2}{5} \cdot 2 = 1.
    \]

    \item Case 5: $S \subseteq V \setminus X$ and $S \subseteq A$.
    
    Since $S\cap\{s,t\}=\emptyset$ and $S$ is $T$-odd, the same parity argument as in Case 4 shows that $\chi_{F^m}(\delta(S))$ is odd.
    Suppose towards contradiction that $\chi_{F^m}(\delta(S)) = 1$. Then, since $e_0$ is not incident to $S$, the subgraph $F_0^m$ has exactly one edge crossing $\delta(S)$.
    Since $S$ contains no terminal, removing all edges of $F_0^m$ incident to vertices in $S$ preserves connectivity among terminals, contradicting the inclusion-wise minimality of $F_0^m$.
    Therefore, $\chi_{F^m}(\delta(S)) \ge 3$.
    Also, every vertex of $A$ has even degree in $F_0$ and $e_0$ is not incident to any vertex in $A$.
    Hence, $\chi_{F}(\delta(S))=\chi_{F_0}(\delta(S))$ is even.
    Since $F = F^m \uplus F^r$ and $\chi_{F^m}(\delta(S))$ is odd, $\chi_{F^r}(\delta(S))$ is odd. This implies $\chi_{F^r}(\delta(S)) \ge 1$.
    Consequently,
    \[
    y(\delta(S)) \ge \frac{1}{5} \cdot \chi_{F^m} (\delta(S)) + \frac{2}{5} \chi_{F^r} (\delta(S)) \ge \frac{1}{5} \cdot 3 + \frac{2}{5} \cdot 1  =  1.
    \]
\end{itemize}
This proves all $T$-odd cut constraints.

It remains to check the degree inequalities. Since $r_v := b_v + 6 - \deg_{F^m}(v)$, it suffices to check whether $\deg_{F^m}(v) + y(\delta(v))  \le b_v + 6$ holds for all $v$.

For every vertex $v$,
\[
\deg_{F^m}(v)+y(\delta(v))
=
\frac45\deg_{F^m}(v)
+\frac25\deg_F(v)
+\frac25x^*(\delta(v))
+\frac15\deg_P(v)
+\frac25\deg_Q(v)
\]
holds since $\deg_{F}(v)=\deg_{F^m}(v)+\deg_{F^r}(v)$.

First consider $v\in V\setminus(A\cup\{s,t\})$.
Then $\deg_F(v)\le b_v/2+3$, $\deg_{F^m}(v)\le\deg_F(v)$,
$x^*(\delta(v))\le b_v$, $\deg_P(v)\le2$, and $\deg_Q(v)\le2$.
Therefore,
\[
\deg_{F^m}(v)+y(\delta(v))
\le
\left(\frac45+\frac25\right)\cdot\left(\frac{b_v}{2}+3\right)
+\frac25 b_v
+\frac15\cdot2
+\frac25\cdot2
=
b_v+\frac{24}{5}
< b_v+6.
\]

Next consider $v \in A$.
Note that $b_v\ge2$.
We have $\deg_F(v)\le8$, $\deg_{F^m}(v)\le B_v+3=4$,
$x^*(\delta(v))<2$, and $\deg_P(v)\le2$.
Assume for now that $\deg_{F^m}(v)=0$. Since $\deg_Q(v)\le 2$, we have
\[
\deg_{F^m}(v)+y(\delta(v))
<
\frac25\cdot8+\frac25\cdot2+\frac15\cdot2+\frac25\cdot2
=
\frac{26}{5}
\le b_v+6.
\]
Otherwise, $\deg_{F^m}(v)>0$. The minimality of $F_0^m$ implies that $v$ belongs to a connected component of $F^m$ that contains terminals. If $F^m$ connects all terminals, we have $\deg_Q(v)=0$. Otherwise, we have $v\in C_s\cup C_t$ and therefore $\deg_Q(v)\le1$. We have in both cases
\[
\deg_{F^m}(v)+y(\delta(v))
<
\frac45\cdot4
+\frac25\cdot8
+\frac25\cdot2
+\frac15\cdot2
+\frac25\cdot1
=
8
\le b_v+6.
\]

Finally, consider $v\in\{s,t\}$.
We have $\deg_F(v)\le B_v+3 = (b_v+1)/2+3$, $\deg_{F^m}(v)\le \deg_F(v)$,
$x^*(\delta(v))\le b_v$, $\deg_P(v)=1$, and $\deg_Q(v)\le1$.
We thus have
\[
\deg_{F^m}(v)+y(\delta(v))
\le
\left(\frac45+\frac25\right)\cdot\left(\frac{b_v+1}{2}+3\right)
+\frac25 b_v
+\frac15
+\frac25
=
b_v+\frac{24}{5}
< b_v+6.
\]

This shows that  $y$ is a feasible 
solution to~\eqref{eq:bd-t-join}.
\end{proof}

\begingroup
\renewcommand{\thetheorem}{\ref{thm:bd-subset-path-11-5-6}}
\begin{theorem}
There exists a $(11/5,+6)$-approximation algorithm for the bounded-degree subset traveling salesman path problem.
\end{theorem}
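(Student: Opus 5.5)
We take the algorithm of \Cref{sec:511} and use \Cref{lem:bdstspp-join-integral} as the main ingredient. As before, it suffices to argue correctness when the instance is feasible; then \eqref{eq:bd-subset-path-tsp} is a relaxation, so $c(x^*) \le \OPT$. The proof has three parts: feasibility of the output, the cost bound, and the degree bound.

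\textbf{Feasibility.} First, $\bar x$ must be feasible for \eqref{eq:bdst} on $G_0$ with bounds $B_v$, so that \Cref{lem:fm-structured} applies.
\begin{itemize}
\item For a terminal-separating cut that is not an $s$-$t$ cut, $x^*(\delta(S))\ge 2$, hence $\bar x(\delta(S)) \ge 1$.
\item For an $s$-$t$ cut, $x^*(\delta(S)) \ge 1$ and $e_0$ crosses it, so $\bar x(\delta(S)) \ge (1+1)/2 = 1$.
\item For degrees: if $v\in A$, then $\bar x(\delta(v)) = x^*(\delta(v))/2 < 1 = B_v$. For $v\in\{s,t\}$, the bound is $(b_v+1)/2$. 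Otherwise it is $b_v/2$.
\end{itemize}
Hence \Cref{lem:fm-structured} gives $c(F_0)\le 2c(\bar x) = c(x^*)\le \OPT$, since $e_0$ has zero cost; in particular $c(F) \le \OPT$. By \Cref{lem:bdstspp-join-integral}, the instance $(G,c,r,T)$ is a valid bounded-degree $T$-join instance and $y$ is feasible for it. So \Cref{lem:bdtjoin_integrality} returns $J$ with $c(J)\le c(y)$ and $\deg_J(v)\le r_v$.

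\textbf{Connectivity and parity.} We have $\odd(H)=\odd(F^m)\triangle T=\{s,t\}$. It remains to show that $H$, after discarding terminal-free components, is connected and spans $X$.
\begin{itemize}
\item If $F^m$ already connects all terminals, this is immediate.
\item Otherwise the terminals lie in $C_s\cup C_t$. Let $U$ be the union of $H$-components meeting $C_s$. If $U$ does not contain $C_t$, then $U$ contains $s$ but not $t$, so $|U\cap\odd(H)|=1$. This contradicts the handshaking lemma applied to the components of $H$ inside $U$.
\end{itemize}
So a single component of $H$ contains all terminals. Deleting components with no terminal keeps degrees even there and does not affect $s,t$. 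The result is therefore a connected subgraph spanning $X$ whose odd vertices are exactly $s,t$. Every edge is used at most twice: this holds for $F$ by \Cref{lem:fm-structured}, and $J\subseteq E$ is a set. If multiplicities above two arise, we remove pairs of parallel copies, which preserves parity and connectivity. This does not increase cost or degree.

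\textbf{Degree bound.} We get $\deg_H(v)\le\deg_{F^m}(v)+r_v=b_v+6$ directly from the definition of $r_v$.

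\textbf{Cost bound.} This is the step to get right:
\[
c(H)\le c(F^m)+c(y)=\tfrac65 c(F^m)+\tfrac25 c(F^r)+\tfrac25 c(x^*)+\tfrac15 c(P)+\tfrac25 c(Q).
\]
Using $c(F^m)+c(F^r)=c(F)\le c(x^*)$, the first two terms are at most $\tfrac65 c(F)\le \tfrac65\OPT$. The remaining terms give $\tfrac25\OPT+\tfrac15\OPT+\tfrac25\OPT$, using $c(Q)\le c(P)\le \OPT$. This totals $\tfrac{6}{5}+1=\tfrac{11}{5}$ times $\OPT$.

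The main subtlety is that the bound on $c(F)$ must come from $c(x^*)$ through $\bar x$, with $e_0$ having zero cost. The other subtlety is the connectivity argument in the two-component case, which relies on the parity of $H$ forcing $C_s$ and $C_t$ into one component. Polynomial running time follows from solving \eqref{eq:bd-subset-path-tsp}, since its separation reduces to min-cut computations, together with the polynomial algorithms of \Cref{lem:fm-structured} and \Cref{lem:bdtjoin_integrality}.
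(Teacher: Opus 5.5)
Your proposal is correct and follows the paper's proof essentially step for step. You verify feasibility of $\bar x$ using the zero-cost edge $e_0$, invoke \Cref{lem:fm-structured} to get $c(F)\le c(x^*)\le\OPT$, and use \Cref{lem:bdstspp-join-integral} to get $c(J)\le c(y)$ and $\deg_J(v)\le r_v$. You then combine the cost terms to reach $\tfrac{11}{5}\OPT$, and the handshaking argument puts $s$ and $t$ in one component, as in the paper. Your remarks on trimming edge multiplicities above two and on polynomial-time separation are harmless details that the paper leaves implicit.
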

\addtocounter{theorem}{-1}
\endgroup

\begin{proof}
First, we claim that $\bar x$ is feasible for~\eqref{eq:bdst} on $G_0$ with the set of terminals $X$ and degree bounds $\{B_v\}_{v \in V}$.
If $S$ is a terminal-separating cut and $|S\cap\{s,t\}|=1$, then
\[
\bar x(\delta_{G_0}(S))=\frac{x^*(\delta_G(S))+1}{2}\ge1.
\]
If $S$ is a terminal-separating cut and $|S\cap\{s,t\}|\neq1$, then
\[
\bar x(\delta_{G_0}(S))=\frac{x^*(\delta_G(S))}{2}\ge1.
\]
The degree inequalities follow from the definition of $\{B_v\}_{v \in V}$: for $v\in A$,
$\bar x(\delta_{G_0}(v))=\frac{x^*(\delta_G(S))}{2}<1=B_v$; for $v\in\{s,t\}$,
$\bar x(\delta_{G_0}(v))\le (b_v+1)/2=B_v$; and for all remaining vertices,
$\bar x(\delta_{G_0}(v))\le b_v/2=B_v$.

Recall that $e_0$ has zero cost. From \Cref{lem:fm-structured}, we have
\[
c(F)=c(F_0)\le 2c(\bar x)=c(x^*)\le \OPT.
\]
We also have $c(F^m)+c(F^r)=c(F)$, and $c(Q)\leq c(P)\le\OPT$.

Lemma~\ref{lem:bdstspp-join-integral} implies that the $T$-join $J$ found by the algorithm satisfies
$c(J)\le c(y)$ and $\deg_J(v)\le r_v$ for all $v\in V$.
Thus,
\begin{align*}
c(H)
&\le c(F^m)+c(y)\\
&=
\frac65 c(F^m)
+\frac25 c(F^r)
+\frac25 c(x^*)
+\frac15 c(P)
+\frac25 c(Q)\\
&\le
\frac65 c(F)
+\frac25 c(x^*)
+\frac15 c(P)
+\frac25 c(Q)\\
&\le
\left(\frac65+\frac25+\frac15+\frac25\right)\OPT
=
\frac{11}{5}\OPT.
\end{align*}

The degree bound follows from the definition of $r_v$: $\deg_H(v)
\le
\deg_{F^m}(v)+\deg_J(v)
\le
\deg_{F^m}(v)+r_v
=
b_v+6$.

Moreover, $\odd(F^m\uplus J)=\odd(F^m)\triangle T=\{s,t\}$.
All terminals are connected in $F^m$, except that $F^m$ may have two terminal-containing components, one containing $s$ and one containing $t$.
Since every connected component of a graph has an even number of odd-degree vertices, $s$ and $t$ lie in the same connected component of $F^m\uplus J$.
Hence, all terminals lie in the same connected component.
Deleting any component that contains no terminal does not affect feasibility.
\end{proof}

\subsubsection{A \texorpdfstring{$(2, +8)$}{(2, +8)}-Approximation Algorithm}
We provide an alternative trade-off of the bicriteria performance guarantee;
although the algorithm and its analysis reuse those from Section~\ref{sec:511}, Appendix~\ref{app:bdstspp:2:8} gives a detailed presentation of the alternative algorithm for the sake of completeness.
	
\bibliography{cite}

\appendix

\section{Deferred from Section~\ref{sec:tsp}: Bounded-Degree Traveling Salesman Problems}

\subsection{Analysis of Our \texorpdfstring{$(5/3, +4)$}{(5/3, +4)}-Approximation Algorithm for BDTSPP}\label{app:534}

\begingroup
\renewcommand{\thetheorem}{\ref{thm:bdpathtsp}}
\begin{theorem}
There exists a $(5/3,+4)$-approximation algorithm for the bounded-degree traveling salesman path problem.
\end{theorem}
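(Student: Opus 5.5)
We analyse the BDTSPP algorithm of Section~\ref{sec:bdptsp} along the lines of the proof of Theorem~\ref{thm:bdtsp}, replacing the halved LP solution with the combination from Lemma~\ref{lem:cut-feasibility-path}. As before, it suffices to treat feasible instances.

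\textbf{Step 1: the tree.} By Lemma~\ref{lem:feasible-to-tree}, there is a spanning tree of cost at most $\OPT$ with degrees at most $\lfloor b_v/2\rfloor+1=B_v$. Hence the instance given to Lemma~\ref{lem:bdst} is feasible, and we obtain a spanning tree $F$ with $c(F)\le\OPT$ and $\deg_F(v)\le \lfloor b_v/2\rfloor+2$.

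\textbf{Step 2: the $T$-join instance and LP point.} By construction, $b'_v\equiv \mathbf{1}_{v\in T}\pmod 2$, so Lemma~\ref{lem:bdtjoin_integrality} applies. Let $x^*$ be optimal for~\eqref{eq:bd-path-tsp} and set $y:=\frac13x^*+\frac13\chi_F$. By Lemma~\ref{lem:cut-feasibility-path}, $y(\delta(S))\ge1$ for every $T$-odd cut $S$.

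The main point to verify is the degree constraint $y(\delta(v))\le b'_v$. We have
\[
y(\delta(v))\le \frac{b_v}{3}+\frac13\Bigl(\Bigl\lfloor\frac{b_v}{2}\Bigr\rfloor+2\Bigr)\le \frac{b_v}{3}+\frac{b_v}{6}+\frac23=\frac{b_v}{2}+\frac23\le b'_v .
\]
The choice $b'_v\ge b_v/2+2/3$ in the algorithm is exactly what makes this work. Hence $y$ is feasible for~\eqref{eq:bd-t-join}, so $c(J)\le c(y)\le\frac13\OPT+\frac13\OPT=\frac23\OPT$, where we use $c(x^*)\le\OPT$. Also $\deg_J(v)\le b'_v$.

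\textbf{Step 3: feasibility and cost.} The multigraph $H=F\uplus J$ is connected and spanning. Its odd-degree set is $\odd(H)=\odd(F)\triangle T=\{s,t\}$, so it admits an Eulerian $s$-$t$ walk. Its cost is at most $\OPT+\frac23\OPT=\frac53\OPT$.

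\textbf{Step 4: the additive degree bound.} This is the step needing care with rounding. Since $b'_v$ is the smallest integer with the right parity and $b'_v\ge b_v/2+2/3$, we have $b'_v\le\lceil b_v/2+2/3\rceil+1$. Then
\[
\deg_H(v)\le \Bigl\lfloor\frac{b_v}{2}\Bigr\rfloor+2+\Bigl\lceil\frac{b_v}{2}+\frac23\Bigr\rceil+1 .
\]
We split by the parity of $b_v$.
\begin{itemize}
\item If $b_v$ is even, the bound is $b_v+4$.
\item If $b_v$ is odd, write $b_v=2k+1$. Then $\lceil k+1/2+2/3\rceil=k+2$, and the bound is $k+2+k+2+1=b_v+4$.
\end{itemize}
So in both cases $\deg_H(v)\le b_v+4$. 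Finally, $\deg_H(v)\equiv b_v\pmod 2$, since $v$ has odd degree in $H$ exactly when $v\in\{s,t\}$. Therefore no further parity rounding is needed, and the guarantee is $+4$.
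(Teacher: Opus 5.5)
Your proof is correct and follows the paper's argument step for step: the same Singh--Lau tree via Lemma~\ref{lem:feasible-to-tree}, the same point $y=\frac13x^*+\frac13\chi_F$ certified by Lemma~\ref{lem:cut-feasibility-path}, and the same cost accounting. The only difference is in the final rounding. The paper uses $b'_v<\frac{b_v}{2}+\frac23+2$ to get $\deg_H(v)<b_v+\frac{14}{3}$ and then invokes parity, whereas you bound $b'_v\le\lceil b_v/2+2/3\rceil+1$ and split on the parity of $b_v$ to reach $b_v+4$ directly; both are valid.
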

\addtocounter{theorem}{-1}
\endgroup

\begin{proof}
It suffices to show the algorithm's correctness only for the case where the input instance is feasible.  From \Cref{lem:feasible-to-tree}, \Cref{lem:bdst} returns a spanning tree $F$ with $c(F) \le \OPT$ and  $\deg_F(v) \le \left\lfloor \frac{b_v}{2} \right\rfloor + 2$ for all $v \in V$.

Let $x^*$ be the optimal solution to~\eqref{eq:bd-path-tsp} and
$y \coloneqq \frac{1}{3} x^* + \frac{1}{3} \chi_F$.
By \Cref{lem:cut-feasibility-path}, $y$ satisfies all $T$-odd cut constraints of~\eqref{eq:bd-t-join}.
For all $v \in V$,
\[
y(\delta(v)) = \frac{1}{3}\bigl(x^*(\delta(v)) + \deg_F(v)\bigr)
\le \frac{1}{3}\left(b_v + \left\lfloor \frac{b_v}{2} \right\rfloor + 2\right)
\le \frac{b_v}{2} + \frac{2}{3}.
\]
Hence, $y$ is feasible for~\eqref{eq:bd-t-join} with degree bounds $b'$. Moreover, the bounds $b'$ have the parity required by \Cref{lem:bdtjoin_integrality}. We thus have
\[
c(J) \le c(y) = \frac{1}{3} c(x^*) + \frac{1}{3} c(F) \le \frac{2}{3}\,\OPT
\]
and $\deg_J(v) \le b'_v < \frac{b_v}{2} + \frac{2}{3} + 2$ for all $v$.

Let $H \coloneqq F \uplus J$.
Then $H$ is connected, spans all vertices, and satisfies $\odd(H) = \odd(F) \triangle T = \{s,t\}$.
Its cost is bounded by
\[
c(H) = c(F) + c(J) \le \OPT + \frac{2}{3}\,\OPT = \frac{5}{3}\,\OPT.
\]
Moreover,
\[
\deg_H(v) \le \deg_F(v) + \deg_J(v) < \left(\frac{b_v}{2} + 2\right) + \left(\frac{b_v}{2} + \frac{2}{3} + 2\right) = b_v + \frac{14}{3}.
\]
Since $\deg_H(v)$ and $b_v$ have the same parity and $\deg_H(v) < b_v + \frac{14}{3}$, we get $\deg_H(v) \le b_v+4$.
\end{proof}

\subsection{Integrality Gap Instances}\label{app:intgap}
We present a family of cycle instances showing the following.
\begingroup
\renewcommand{\theobservation}{\ref{obs:int_gap}}
\begin{observation}[Rephrased]
For all $\epsilon>0$, there exists an instance of the bounded-degree traveling salesman path problem such that~\eqref{eq:bd-path-tsp} is feasible, but every integral solution violates some degree bound by at least $2$, and, even after allowing  arbitrary  degree violation, the ratio between the minimum integral cost and the LP optimum is at least $3/2-\epsilon$.
\end{observation}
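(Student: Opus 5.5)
The plan is to use a cycle instance with unit costs and terminals placed antipodally. The LP has the all-ones solution, while every integral solution is forced to be both infeasible for the degree bounds and expensive.

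\textbf{Construction.} Fix an even $n$, large in terms of $\epsilon$. Let $G$ be the cycle $C_n$ with unit edge costs. Let $s,t$ be antipodal, so both $s$-$t$ arcs have $n/2$ edges. Set $b_s=b_t=3$ and $b_v=2$ for all other $v$; these parities match \Cref{prob:bdpath}.

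\textbf{LP feasibility and value.} Take $x\equiv 1$. Every cut of a cycle crosses at least two edges, so all cut constraints of~\eqref{eq:bd-path-tsp} hold. Also $x(\delta(v))=2\le b_v$ for every $v$. Hence the LP optimum is at most $n$.

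\textbf{Degree violation.} Let $H$ be a solution that respects all degree bounds. Every internal vertex is non-isolated and even in $H$, so it has degree exactly $2$.
\begin{itemize}
\item Suppose every cycle edge is used. If some edge is doubled, it lies on one of the two arcs, and those arcs have internal vertices (taking $n\ge 6$). Hence some doubled edge is incident to an internal vertex, whose degree then exceeds $2$. If no edge is doubled, then $\deg_H(s)=2$ is even, which is also impossible.
\item Otherwise some cycle edge is unused. Then the used support is a Hamiltonian path of the cycle, and connectivity forces all remaining cycle edges to be used. Raising $s$ to odd degree $3$ means doubling an edge at $s$ whose other endpoint is an endpoint of that path; otherwise an internal vertex reaches degree at least $3$. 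The same holds for $t$. So the single missing edge would have to be adjacent to neighbours of both $s$ and $t$, which is impossible once $n\ge 8$.
\end{itemize}
Thus no solution meets all bounds. Since degree parities agree with $b_v$, some vertex has degree at least $b_v+2$.

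\textbf{Cost lower bound.} Now drop all degree restrictions and split into the same two cases.
\begin{itemize}
\item If every edge is used, let $J$ be the set of edges with even multiplicity. Then $\odd(H)$ equals $\odd$ of the set of odd-multiplicity edges. This equals $\odd(C_n)\triangle\odd(J)=\odd(J)$, because $C_n$ is Eulerian. So $J$ is an $\{s,t\}$-join with every edge used at least twice. Hence $c(H)\ge n+c(J)\ge n+n/2$.
\item If some edge is unused, the support is a path $p$ with $n-1$ edges containing $s$ and $t$. Here $s,t$ are antipodal and at most one of them is an endpoint of $p$. An Eulerian $s$-$t$ walk on $p$ must traverse twice every edge lying outside the subpath of $p$ between $s$ and $t$. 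There are at least $n/2-1$ such edges, so $c(H)\ge (n-1)+(n/2-1)$.
\end{itemize}
In both cases the ratio to the LP value $n$ is at least $3/2-3/n$, which exceeds $3/2-\epsilon$ for $n$ large. This matches the claimed observation.

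\textbf{Main obstacle.} The delicate part is the case analysis for unused edges. The degree-violation argument must hold for every solution, and it needs $n$ large enough that the endpoints of the missing edge cannot sit next to both $s$ and $t$. The cost argument must handle the path case with $s,t$ not at its ends. I would verify both on small cases, $n=8$ and $n=10$, before finalising the constants.
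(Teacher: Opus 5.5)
Your instance is the paper's own: it uses $C_{2n}$ with $n\ge 4$, antipodal $s,t$, $b_s=b_t=3$, $b_v=2$ elsewhere, and the all-ones LP point. Your cost bound, $3n/2-2$ on $C_n$, is correct and matches the paper's $3n-2$ on $C_{2n}$. The paper only asserts infeasibility and the minimum cost, so your case analysis supplies details it leaves out.

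However, your degree-violation argument misses one subcase. When a cycle edge $\{a,b\}$ is unused, you argue that $s$ and $t$ must each be raised to degree $3$. That presupposes both are interior vertices of the Hamiltonian path $p$. If instead $s$ is an endpoint of $p$ (say $a=s$), then $s$ has path-degree $1$, and $\deg_H(s)=1$ is odd and within its bound. So nothing forces a doubled edge at $s$, and your conclusion that the missing edge joins a neighbour of $s$ to a neighbour of $t$ is not established in this case.

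The repair is short. Since $s$ and $t$ are non-adjacent, they cannot both be endpoints of $p$, so $t$ is interior. Your argument then applies to $t$: its doubled edge must go to an endpoint of $p$. That endpoint is not $s$, so it is $b$. Then $s,b,t$ are consecutive on the cycle, giving $n/2=d(s,t)\le 2$, which is impossible for $n\ge 6$.

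Equivalently, $b$ has bound $2$ and path-degree $1$, so its unique path edge must be doubled. The other end of that edge must then be $t$, since otherwise an interior vertex with bound $2$ reaches degree $3$.

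With this subcase added, your proof is complete.
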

\addtocounter{observation}{-1}
\endgroup

For $n\geq 4$, let $G = C_{2n}$ with unit edge costs, and choose $s$ and $t$ as antipodal vertices.
Set $b_s = b_t = 3$ and $b_v = 2$ for $v \notin \{s, t\}$.
The vector $x_e = 1$ for all edges of the cycle is feasible for~\eqref{eq:bd-path-tsp} (and for~\eqref{eq:bd-subset-path-tsp} with $X \coloneqq V$) with cost $2n$.

There is no feasible integral solution that exactly satisfies all degree bounds.
Any integral solution must violate at least one degree bound, and by parity the smallest possible additive violation is $+2$.
Even if we allow arbitrary degree violation, the minimum integral cost is $3n-2$. Consequently, the ratio between the minimum cost of an integral solution with no degree restrictions and the LP optimum is at least
\[
\frac{3n - 2}{2n} = \frac{3}{2} - \frac{1}{n},
\]
which tends to $\frac{3}{2}$ as $n \to \infty$.

\begin{figure}[htbp]
    \centering
    \begin{tikzpicture}[scale=1.5, every node/.style={circle, draw, fill=white, minimum size=0.6cm, inner sep=0pt}]
        \def\radius{1.5}
        \foreach \i in {0,...,7} {
            \pgfmathsetmacro{\angle}{90 - \i * 360 / 8}
            \node (v\i) at (\angle:\radius) {};
        }
        \node[fill=gray!20] at (90:\radius) {$s$};
        \node[fill=gray!20] at (-90:\radius) {$t$};
        \draw[thick] (v0) -- (v1) -- (v2) -- (v3) -- (v4) -- (v5) -- (v6) -- (v7) -- (v0);
        \node[draw=none, fill=none] at (90:\radius+0.4) {$b_s=3$};
        \node[draw=none, fill=none] at (-90:\radius+0.4) {$b_t=3$};
        \node[draw=none, fill=none] at (0:\radius+0.6) {$b_v=2$};
    \end{tikzpicture}
    \caption{The cycle instance for $n=4$. The all-ones vector on the cycle has LP cost $2n$, but no exact bounded-degree integral $s$-$t$ path  exists. Even if arbitrary degree violations are allowed, the minimum integral solution cost is $3n-2$.}
    \label{fig:gap_instance}
\end{figure}

\section{Deferred from Section~\ref{sec:subsettsp}: Bounded-Degree Subset Traveling Salesman Problems}

\subsection{A \texorpdfstring{$(14/9, +6)$}{(14/9, +6)}-Approximation Algorithm for BDSTSP}
\label{app:subset:circuit}
In this appendix, we present a $(14/9, +6)$-approximation algorithm for the bounded-degree subset traveling salesman problem.
Let $(G = (V, E), \{c_e\}_{e \in E}, \{b_v\}_{v \in V}, X)$ be the given instance.

\paragraph*{Algorithm description.}
Solve~\eqref{eq:bd-subset-tsp} to obtain an optimal solution $x^*$.
Let $A := \{v \in V \setminus X : x^*(\delta(v)) < 2 \}$, and let
\[
B_v := \begin{cases}
    1, & v \in A, \\
    \frac{b_v}2, & v \not \in A.
\end{cases}
\]

Apply \Cref{lem:fm-structured} to the bounded-degree Steiner tree instance $(G=(V,E), \{c_e\}_{e \in E},$ $ \{B_v\}_{v \in V}, X)$ with LP solution $\bar x:= \frac{x^*}{2}$ and $A$.
This gives a subgraph $F \subseteq 2G$ spanning $X$, and an inclusion-wise minimal subgraph $F^m \subseteq F$ that is  feasible for the same bounded-degree Steiner tree instance.

Define $T := \odd(F^m)$.
For each vertex $v$, set the ``residual'' degree bound
\[
r_v := b_v + 6 - \deg_{F^m}(v),
\]
and find a minimum-cost bounded-degree $T$-join $J$ of $G$ with degree bounds $r_v$ using \Cref{lem:bdtjoin_integrality}.

Output $H := F^m \uplus J$, after deleting any connected component that contains no terminal.

\paragraph*{Algorithm analysis.}
Let $F^r := F \setminus F^m$ be the multiset of ``remaining'' edges.

\begin{lemma}\label{lem:bdstsp-join-integral}
The residual degree bound $r_v$ is odd if and only if $v \in T$, making $(G, \{c_e\}_{e \in E},$ $ \{r_v \}_{v \in V}, T)$ a valid bounded-degree $T$-join instance.
The vector
\[
y := \frac{1}{9} \chi_{F^m} + \frac{2}{3} \chi_{F^r} + \frac{4}{9} x^*
\]
is a feasible solution to~\eqref{eq:bd-t-join} for this instance.
\end{lemma}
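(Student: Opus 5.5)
The plan follows the proof of \Cref{lem:bdstspp-join-integral}, but the circuit setting is simpler: there is no auxiliary edge $e_0$, no $P$ or $Q$, and $F^m$ always connects all terminals. First I check parity. Every $b_v$ is even, so $r_v=b_v+6-\deg_{F^m}(v)$ is odd exactly when $\deg_{F^m}(v)$ is odd, that is, exactly when $v\in\odd(F^m)=T$. \Cref{lem:bdtjoin_integrality} therefore applies.

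For the cut constraints, take a $T$-odd cut $S$. Since $|T|$ is even, the complement is also $T$-odd, so I may replace $S$ by its complement if needed. This lets me assume that either $S$ is terminal-separating or $S\subseteq V\setminus X$. There are three cases.

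\emph{Case 1: $S$ is terminal-separating.} Then $x^*(\delta(S))\ge 2$, and $F^m$ crosses $\delta(S)$ because it connects all terminals. Hence $y(\delta(S))\ge \frac19+\frac49\cdot 2=1$.

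\emph{Case 2: $S\subseteq V\setminus X$ and some $v\in S\setminus A$.} The fourth constraint of~\eqref{eq:bd-subset-tsp} gives $x^*(\delta(S))\ge x^*(\delta(v))\ge 2$. Since $|S\cap\odd(F^m)|$ is odd, the handshaking lemma makes $\chi_{F^m}(\delta(S))$ odd, hence at least $1$. The same bound $\frac19+\frac89=1$ follows.

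\emph{Case 3: $S\subseteq A$.} This is the main obstacle, and I treat it exactly as Case 5 of the path proof. $\chi_{F^m}(\delta(S))$ is odd. If it equalled $1$, then deleting all edges of $F^m$ incident to $S$ would keep the terminals connected, because $S$ contains no terminal; this contradicts the minimality of $F^m$. So $\chi_{F^m}(\delta(S))\ge 3$. By \Cref{lem:fm-structured}, every vertex of $A$ has even degree in $F$, so $\chi_F(\delta(S))$ is even. Hence $\chi_{F^r}(\delta(S))$ is odd and at least $1$. This gives $y(\delta(S))\ge \frac39+\frac23=1$. The coefficients $\frac19,\frac23,\frac49$ are chosen precisely so that all three cases come out to exactly $1$.

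For the degree constraints, it suffices to show $\deg_{F^m}(v)+y(\delta(v))\le b_v+6$. Using $\deg_F=\deg_{F^m}+\deg_{F^r}$, the left side equals
\[
\tfrac49\deg_{F^m}(v)+\tfrac23\deg_F(v)+\tfrac49x^*(\delta(v)).
\]
For $v\notin A$, I use $\deg_{F^m}(v)\le\deg_F(v)\le b_v/2+3$ from \Cref{lem:fm-structured} and $x^*(\delta(v))\le b_v$. This gives at most $\frac{10}{9}(\frac{b_v}{2}+3)+\frac49 b_v=b_v+\frac{10}{3}<b_v+6$.

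For $v\in A$, I use $\deg_{F^m}(v)\le B_v+3=4$, $\deg_F(v)\le 8$ and $x^*(\delta(v))<2$. This gives less than $\frac{16}{9}+\frac{48}{9}+\frac{8}{9}=8\le b_v+6$, since $b_v\ge 2$ is a positive even integer. This completes feasibility of $y$ for~\eqref{eq:bd-t-join}.
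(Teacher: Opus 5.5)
Your proposal is correct and follows the paper's proof essentially step for step: the same parity check, the same three-case analysis of $T$-odd cuts (including the minimality and even-degree argument for $S\subseteq A$), and the same degree bounds. Your expression $\frac49\deg_{F^m}(v)+\frac23\deg_F(v)+\frac49x^*(\delta(v))$ is just a rearrangement of the paper's $\frac{10}{9}\deg_{F^m}(v)+\frac23\deg_{F^r}(v)+\frac49x^*(\delta(v))$. One trivial slip: the constraint $x(\delta(v))\le x(\delta(S))$ is the third family in \eqref{eq:bd-subset-tsp}; it is the fourth only in the path LP \eqref{eq:bd-subset-path-tsp}.
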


\begin{proof}
First, $r_v$ is odd if and only if $v \in T$ since $b_v$ is even for all $v \in V$.

Now we verify the cut and degree constraints for $y$. Consider a \(T\)-odd cut \(\delta(S)\), i.e., \(|S\cap T|\) is odd.
Since \(|T|\) is even, the complementary side \(V\setminus S\) is also \(T\)-odd.
Thus, by taking the appropriate side of the cut, we can assume without loss of generality that either \(\delta(S)\) is terminal-separating or \(S\subseteq V\setminus X\).

\begin{itemize}

    \item Case 1: $S \cap X \neq \emptyset$, $(V \setminus S) \cap X \neq \emptyset$.
    
    In this case, $x^*(\delta(S)) \ge 2$.
    Also, since $F^m$ connects all terminals, $|\delta_{F^m}(S)| \ge 1$.
    Therefore,
    \[
    y(\delta(S)) \ge \frac{1}{9} \cdot 1 + \frac{4}{9} \cdot 2 = 1.
    \]
    
    \item Case 2: $S \subseteq V \setminus X$ and $S \setminus A \neq \emptyset$.
    
    Let $v \in S \setminus A$.
    Then, $x^*(\delta(S)) \ge x^*(\delta(v)) \ge 2$ due to the third constraints of~\eqref{eq:bd-subset-tsp}.
    Since $S$ is $T$-odd, we have $|S\cap \odd(F^m)| \equiv |S\cap T| \equiv 1 \pmod 2$.
    By the handshaking lemma, $\chi_{F^m}(\delta(S))$ is odd, and hence $\chi_{F^m}(\delta(S)) \ge 1$.
    Thus,
    \[
    y(\delta(S)) \ge \frac{1}{9} \cdot 1 + \frac{4}{9} \cdot 2 = 1.
    \]
    
    \item Case 3: $S \subseteq V \setminus X$ and $S \subseteq A$.
    
    Since $S$ is an $T$-odd cut, $\chi_{F^m}(\delta(S))$ is odd.
    From the minimality of $F^m$, $\chi_{F^m}(\delta(S))$ cannot be 1, analogous to the argument in the proof of \Cref{lem:bdstspp-join-integral}~(Case~5).
    Hence, $\chi_{F^m}(\delta(S)) \ge 3$.
    
    On the other hand, $S \subseteq A$ implies $\chi_{F}(\delta(S))$ is even.
    Since $F = F^m \uplus F^r$ and $\chi_{F^m}(\delta(S))$ is odd, we have $\chi_{F^r}(\delta(S))\ge 1$.
    Consequently,
    \[
    y(\delta(S)) \ge \frac{1}{9} \cdot 3 + \frac{2}{3} \cdot 1 = 1.
    \]
\end{itemize}
This proves all cut constraints.

Now we check the degree constraints.
Since
$
r_v := b_v + 6 - \deg_{F^m}(v)
$, it suffices to verify 
$\deg_{F^m}(v) + y(\delta(v)) \leq b_v+6$ for all $v$.

For every vertex $v$, we have
\[
\deg_{F^m}(v) + y(\delta(v)) = \frac{10}{9} \deg_{F^m}(v) + \frac{2}{3} \deg_{F^r}(v) + \frac{4}{9} x^*(\delta(v)).
\]
If $v \notin A$, $\deg_{F^m}(v) + \deg_{F^r}(v) = \deg_F(v) \le b_v/2 + 3$ and $x^*(\delta(v)) \le b_v$.
Hence,
\[
\deg_{F^m}(v) + y(\delta(v)) \le \frac{10}{9} (\deg_{F^m}(v) + \deg_{F^r}(v)) + \frac{4}{9} b_v \le \frac{10}{9} (b_v/2 + 3) + \frac{4}{9} b_v = b_v + \frac{10}{3} < b_v + 6.
\]

If $v \in A$, then $\deg_{F^m}(v) \le 4$ and $\deg_{F^m}(v) + \deg_{F^r}(v) \le 8$ from the guarantees of \Cref{lem:fm-structured}.
In addition, $b_v \ge 2$ and $x^*(\delta(v)) < 2$.
Therefore,
\begin{align*}
\deg_{F^m}(v) + y(\delta(v)) &< \frac{10}{9} \deg_{F^m}(v) + \frac{2}{3} (8-\deg_{F^m}(v)) + \frac{4}{9} \cdot 2 \\
&= \frac{4}{9} \deg_{F^m}(v) + \frac{56}{9} \\
&\le \frac{4}{9} \cdot 4 + \frac{56}{9} = 8 \\
&\le b_v + 6.
\end{align*}
Hence, $y$ is a feasible solution to~\eqref{eq:bd-t-join}.
\end{proof}

\begingroup
\renewcommand{\thetheorem}{\ref{thm:bd-subset-14-9-6}}
\begin{theorem}
There exists a $(14/9,+6)$-approximation algorithm for the bounded-degree subset traveling salesman problem.
\end{theorem}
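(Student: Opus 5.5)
The proof will mirror the proof of Theorem~\ref{thm:bd-subset-path-11-5-6}, now for the circuit algorithm described above, and will use Lemma~\ref{lem:bdstsp-join-integral} for the $T$-join part.

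First, we check that $\bar x = x^*/2$ is feasible for \eqref{eq:bdst} with terminals $X$ and bounds $B_v$, so that \Cref{lem:fm-structured} applies.
\begin{itemize}
\item Every terminal-separating cut satisfies $x^*(\delta(S)) \ge 2$, so $\bar x(\delta(S)) \ge 1$.
\item For $v \in A$ we have $\bar x(\delta(v)) < 1 = B_v$, and $A \subseteq V \setminus X$ holds by definition.
\item For $v \notin A$ we have $\bar x(\delta(v)) \le b_v/2 = B_v$.
\end{itemize}
\Cref{lem:fm-structured} then gives $c(F) \le 2c(\bar x) = c(x^*) \le \OPT$. Here we use that \eqref{eq:bd-subset-tsp} is a relaxation, as shown in \cite{bdtsp}. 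We also record that $c(F^m) + c(F^r) = c(F)$.

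Next, we bound the cost. By Lemma~\ref{lem:bdstsp-join-integral} and \Cref{lem:bdtjoin_integrality}, the $T$-join satisfies $c(J) \le c(y)$ and $\deg_J(v) \le r_v$. This gives
\[
c(H) \le c(F^m) + c(y) = \tfrac{10}{9} c(F^m) + \tfrac{2}{3} c(F^r) + \tfrac{4}{9} c(x^*).
\]
Since $\tfrac{2}{3} \le \tfrac{10}{9}$, the first two terms are at most $\tfrac{10}{9} c(F) \le \tfrac{10}{9}\OPT$. The last term is at most $\tfrac{4}{9}\OPT$, so the total is at most $\tfrac{14}{9}\OPT$. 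Deleting terminal-free components only lowers the cost.

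The degree bound follows from the choice of residual bounds: $\deg_H(v) \le \deg_{F^m}(v) + r_v = b_v + 6$.

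Finally, we check feasibility. We have $\odd(F^m \uplus J) = \odd(F^m) \triangle T = \emptyset$, so every vertex has even degree. Since $F^m$ spans $X$ connectedly, all terminals lie in a single component of $H$. Removing the other components keeps all degrees even and does not increase any degree, so the output is a connected even-degree subgraph of $2G$ spanning $X$.

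There are two points to watch. One is that $r_v$ may need to be nonnegative for the $T$-join instance to make sense. The degree computation in Lemma~\ref{lem:bdstsp-join-integral} already yields $\deg_{F^m}(v) \le b_v + 6$, so this holds. The other is that $H \subseteq 2G$: $F^m \subseteq 2G$, but $J$ is a simple edge set, so an edge could in principle appear up to three times in $F^m \uplus J$. If so, removing two parallel copies preserves parity and connectivity and only lowers cost and degree. I expect this to be the only step needing extra care, and it is routine.
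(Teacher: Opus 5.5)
Your proposal is correct and follows the paper's proof essentially step for step. Like the paper, you bound $c(H)\le c(F^m)+c(y)\le \tfrac{10}{9}c(F)+\tfrac{4}{9}c(x^*)$ using $\tfrac{2}{3}\le\tfrac{10}{9}$, get the $+6$ degree violation from the choice of $r_v$, and get even degrees from $T=\odd(F^m)$. Your extra checks are valid details the paper leaves implicit: that $x^*/2$ is feasible for \eqref{eq:bdst}, that $r_v\ge 0$, and that edges appearing three times are trimmed so the output stays within $2G$.
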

\addtocounter{theorem}{-1}
\endgroup

\begin{proof}
Since $F := F^m \uplus F^r$ and $c(F) \le \OPT$, we have $c(F^m) + c(F^r) \le OPT$.
Therefore,
\begin{align*}
c(F^m \uplus J) &\le c(F^m) + c(y) \\
&= \frac{10}{9} c(F^m) + \frac{2}{3} c(F^r) + \frac{4}{9} c(x^*) \\
&\le \frac{10}{9} c(F) + \frac{4}{9} c(x^*) \\
&\le \frac{14}{9} \OPT.
\end{align*}

We can also bound degree violation as follows:
\[
\deg_{F^m\uplus J} (v) \le \deg_{F^m}(v) + r_v = \deg_{F^m}(v) + (b_v + 6 - \deg_{F^m}(v)) = b_v + 6.
\]
Since $F^m$ is connected and spans $X$, the same is true for $H := F^m \uplus J$.
Since $J$ is a $T$-join and $T=\odd(F^m)$, every vertex has even degree in $H$.
\end{proof}

\subsection{A \texorpdfstring{$(2, +8)$}{(2, +8)}-Approximation Algorithm for BDSTSPP}\label{app:bdstspp:2:8}

\paragraph*{Algorithm description.}
Solve~\eqref{eq:bd-subset-path-tsp} to obtain an optimal solution $x^*$. Let  $A := \{v \in V \setminus X : x^*(\delta(v)) < 2\}$, and let
\[
B_v := \begin{cases}
    1, & v\in A, \\
    \frac{b_v + 1}{2}, & v \in \{s, t \}, \\
    \frac{b_v}{2}, & v \in V \setminus (A \cup \{s, t\}).
\end{cases}
\]
Add a zero-cost auxiliary edge \(e_0 = st\) and let $G_0 :=  (V,\,E \cup \{e_0\})$.
Let
\[
\bar x := \frac{x^* + \chi_{\{e_0\}}}{2}.
\]

Apply \Cref{lem:fm-structured} to the bounded-degree Steiner tree instance $(G_0, \{c_e\}_{e \in E}, \{B_v\}_{v \in V}, X)$ with $\bar{x}$ and $A$.
Obtain a connected subgraph $F_0 \subseteq 2G_0$ spanning all terminals such that $c(F_0)\le 2c(\bar x)$, $\deg_{F_0}(v)\le B_v+3$ for all \(v\notin A\), and \(\deg_{F_0}(v)\le 8\) with \(\deg_{F_0}(v)\) even for all \(v\in A\).

Let \(F\) be the subgraph of the original graph \(G\) obtained by deleting all copies of \(e_0\) from \(F_0\).
Define \(T := \odd(F)\triangle\{s,t\}\).
For each vertex \(v\), set the residual degree bound
\[
r_v := b_v + 8 - \deg_F(v),
\]
and find a minimum-cost bounded-degree $T$-join $J$ of $G$ with degree bounds $r_v$ using \Cref{lem:bdtjoin_integrality}.

Output \(H:=F\uplus J\), after deleting any connected component that contains no terminal.

\paragraph*{Algorithm analysis.}
Let \(P\) be a shortest simple \(s\)-\(t\) path in \(G\).
Since every feasible BDSTSPP solution contains an \(s\)-\(t\) path as a subgraph, \(c(P)\le\OPT\).
Analogous to the observation at the beginning of \Cref{sec:511},  \(F\) either has a connected component containing all terminals, or  has two terminal-containing components: one containing \(s\) and the other containing \(t\).

\begin{lemma}\label{lem:bdstspp-2-8-join-integral}
The residual degree bound \(r_v\) is odd if and only if \(v\in T\), making $(G,\{c_e\}_{e \in E},$ $\{r_v\}_{v\in V},T)$ a valid bounded-degree \(T\)-join instance.
The vector
\[
y:=\frac12x^*+\frac12\chi_P
\]
is feasible to~\eqref{eq:bd-t-join} for this instance.
\end{lemma}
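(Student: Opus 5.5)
The proof has three parts: the parity claim, the $T$-odd cut constraints, and the degree constraints, mirroring \Cref{lem:bdstspp-join-integral}.

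\emph{Parity.} For $v\in\{s,t\}$, $b_v+8$ is odd, so $r_v$ is odd exactly when $\deg_F(v)$ is even, i.e.\ when $v\in\odd(F)\triangle\{s,t\}=T$. For $v\notin\{s,t\}$, $b_v+8$ is even, so $r_v$ is odd exactly when $v\in\odd(F)=T$.

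\emph{Cut constraints.} Let $S$ be $T$-odd. By complementing if necessary, $S$ is either terminal-separating or $S\subseteq V\setminus X$.
\begin{itemize}
\item If $S$ is terminal-separating and not an $s$-$t$ cut, then $x^*(\delta(S))\ge 2$, so $y(\delta(S))\ge 1$.
\item If $S$ is terminal-separating and an $s$-$t$ cut, then $x^*(\delta(S))\ge 1$ and $P$ crosses $\delta(S)$, so $y(\delta(S))\ge \tfrac12+\tfrac12=1$.
\item If $S\subseteq V\setminus X$ contains some $v\notin A$, the fourth constraint of~\eqref{eq:bd-subset-path-tsp} gives $x^*(\delta(S))\ge x^*(\delta(v))\ge 2$.
\end{itemize}
The main obstacle is the remaining case $S\subseteq A$: there $x^*$ may be small and $P$ need not cross $S$. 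I will rule this case out by parity. Every $v\in A$ has even degree in $F_0$, and $e_0$ is not incident to $A$ because $s,t\in X$. Hence $\deg_F(v)$ is even and $v\notin\{s,t\}$, so $v\notin T$. Thus $S\cap T=\emptyset$, contradicting $T$-oddness. This is why this version uses the full $F$ rather than $F^m$: every vertex of $A$ is even in $F$, so no $T$-odd cut lies inside $A$.

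\emph{Degree constraints.} It suffices to show $\deg_F(v)+y(\delta(v))\le b_v+8$ for every $v$. I bound $y(\delta(v))\le \tfrac12 x^*(\delta(v))+1$ in general, and $\deg_P(v)=1$ for $v\in\{s,t\}$.
\begin{itemize}
\item For $v\in V\setminus(A\cup\{s,t\})$: $\deg_F(v)\le b_v/2+3$ and $x^*(\delta(v))\le b_v$, giving a total of at most $b_v+4$.
\item For $v\in\{s,t\}$: $\deg_F(v)\le \deg_{F_0}(v)\le (b_v+1)/2+3$, giving at most $\tfrac{b_v+1}{2}+3+\tfrac{b_v}{2}+\tfrac12=b_v+4$.
\item For $v\in A$: $\deg_F(v)\le 8$ and $x^*(\delta(v))<2$, giving less than $8+1+1=10\le b_v+8$, using $b_v\ge 2$.
\end{itemize}
Each bound is within $b_v+8$, so $y$ is feasible for~\eqref{eq:bd-t-join}.
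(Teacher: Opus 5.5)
Your proposal is correct and follows essentially the same approach as the paper's proof. It uses the same parity check and the same cut case split, ruling out $S\subseteq A$ because every vertex of $A$ has even degree in $F$ (so $A\cap T=\emptyset$), together with the same degree bounds for $v\in V\setminus(A\cup\{s,t\})$, $v\in\{s,t\}$, and $v\in A$.
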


\begin{proof}
We check the parity condition for the degree bounds and then verify the cut and degree inequalities for \(y\).

First, \(r_v\) is odd if and only if \(v\in T\).
For \(v\in\{s,t\}\), the bound \(b_v\) is odd, so \(r_v=b_v+8-\deg_F(v)\) is odd if and only if \(\deg_F(v)\) is even.
This is exactly the condition \(v\in\odd(F)\triangle\{s,t\}\).
For \(v\notin\{s,t\}\), the bound \(b_v\) is even, so \(r_v\) is odd if and only if \(\deg_F(v)\) is odd, again exactly when \(v\in T\).

Now we verify the cut and degree inequalities for \(y\). Fix a \(T\)-odd cut \(\delta(S)\), i.e., \(|S\cap T|\) is odd.
Since \(|T|\) is even, the complementary side \(V\setminus S\) is also \(T\)-odd.
Thus, by taking the appropriate side of the cut, we can assume without loss of generality that either \(\delta(S)\) is terminal-separating or \(S\subseteq V\setminus X\).

\begin{itemize}
    \item Case 1: \(S\cap X\neq\emptyset\) and \((V\setminus S)\cap X\neq\emptyset\), and \(\delta(S)\) is an \(s\)-\(t\) cut.

    In this case, \(x^*(\delta(S))\ge1\), and \(P\) crosses \(\delta(S)\).
    Hence
    \[
    y(\delta(S))\ge \frac12\cdot1+\frac12\cdot1=1.
    \]

    \item Case 2: \(S\cap X\neq\emptyset\) and \((V\setminus S)\cap X\neq\emptyset\), and \(\delta(S)\) is not an \(s\)-\(t\) cut.

    In this case, \(x^*(\delta(S))\ge2\).
    Hence
    \[
    y(\delta(S))\ge \frac12\cdot2=1.
    \]

    \item Case 3: \(S\subseteq V\setminus X\).

    Since every vertex of \(A\) has even degree in \(F_0\), and \(e_0\) is not incident to any vertex of \(A\), every vertex of \(A\) has even degree in \(F\).
    Hence, \(A\cap T=\emptyset\) since $s,t\notin A$.
    This shows that  \(S\) cannot be contained in \(A\), since \(S\) is \(T\)-odd.
    Pick \(v\in S\setminus A\).
    By the fourth constraint of~\eqref{eq:bd-subset-path-tsp}, \(x^*(\delta(S))\ge x^*(\delta(v))\ge2\).
    Hence
    \[
    y(\delta(S))\ge \frac12\cdot2=1.
    \]
\end{itemize}
This proves all \(T\)-odd cut inequalities.

It remains to check the degree inequalities.
Since $r_v := b_v + 8 - \deg_F(v)$, it suffices to verify $\deg_F(v) + y(\delta(v)) \le b_v + 8$ holds for all $v$.

First consider \(v\in V\setminus(A\cup\{s,t\})\).
Then \(\deg_F(v)\le \frac{b_v}{2}+3\), \(x^*(\delta(v))\le b_v\), and \(\deg_P(v)\le2\).
Therefore,
\[
\deg_F(v)+y(\delta(v))\le \left(\frac{b_v}{2}+3\right)+\frac12 b_v+\frac12\cdot2=b_v+4< b_v+8.
\]

Next consider \(v\in A\). Note that $b_v \ge 2$. We have \(\deg_F(v)\le8\), \(x^*(\delta(v))<2\), and \(\deg_P(v)\le2\).
Therefore,
\[
\deg_F(v)+y(\delta(v))<8+\frac12\cdot2+\frac12\cdot2=10\le b_v+8.
\]

Finally, consider \(v\in\{s,t\}\). We have \(\deg_F(v)\le \frac{b_v+1}{2}+3\), \(x^*(\delta(v))\le b_v\), and \(\deg_P(v)=1\).
Therefore,
\[
\deg_F(v)+y(\delta(v))\le \left(\frac{b_v+1}{2}+3\right)+\frac12 b_v+\frac12=b_v+4< b_v+8.
\]

This shows that \(y\) is feasible solution to \eqref{eq:bd-t-join}.
\end{proof}

\begingroup
\renewcommand{\thetheorem}{\ref{thm:bd-subset-path-2-8}}
\begin{theorem}
There exists a $(2, +8)$-approximation algorithm for the bounded-degree subset traveling salesman path problem.
\end{theorem}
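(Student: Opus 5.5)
The plan mirrors the proof of \Cref{thm:bd-subset-path-11-5-6}, with \Cref{lem:bdstspp-2-8-join-integral} taking the place of \Cref{lem:bdstspp-join-integral}. The argument has three parts: feasibility of $\bar x$, cost, and degree/connectivity.

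\textbf{Feasibility of $\bar x$.} First I check that $\bar x$ is feasible for~\eqref{eq:bdst} on $G_0$ with terminals $X$ and bounds $\{B_v\}$. This is verbatim the argument given there. For terminal-separating $s$-$t$ cuts the auxiliary edge contributes $\tfrac12$, so $\bar x(\delta_{G_0}(S)) = (x^*(\delta_G(S))+1)/2 \ge 1$. For the other terminal-separating cuts, $x^*(\delta_G(S))\ge 2$. The degree bounds follow from the definition of $B_v$: for $v\in A$ we have $x^*(\delta(v))<2$, and for $s,t$ the extra $\tfrac12$ from $e_0$ is absorbed by $B_v=(b_v+1)/2$. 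Hence \Cref{lem:fm-structured} applies.

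\textbf{Cost.} Since $e_0$ has zero cost, \Cref{lem:fm-structured} gives $c(F)=c(F_0)\le 2c(\bar x)=c(x^*)\le\OPT$. By \Cref{lem:bdstspp-2-8-join-integral} the $T$-join instance is valid and $y=\tfrac12x^*+\tfrac12\chi_P$ is feasible for it. By \Cref{lem:bdtjoin_integrality} the algorithm therefore finds $J$ with
\[
c(J)\le c(y)=\tfrac12c(x^*)+\tfrac12c(P)\le \OPT ,
\]
using that $P$ is a shortest $s$-$t$ path and every solution contains an $s$-$t$ path. Hence $c(H)\le c(F)+c(J)\le 2\,\OPT$.

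\textbf{Degrees and connectivity.} For degrees, $\deg_H(v)\le \deg_F(v)+\deg_J(v)\le \deg_F(v)+r_v=b_v+8$; deleting terminal-free components only lowers degrees.

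For connectivity and parity, $\odd(F\uplus J)=\odd(F)\triangle T=\{s,t\}$. The terminals in $F$ lie either in one component or in two components, one containing $s$ and one containing $t$. In the latter case, each component of $F\uplus J$ has an even number of odd-degree vertices, so $s$ and $t$ cannot lie in separate components of $F\uplus J$. This merges the two terminal components, and all terminals end up in one component.

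Removing the components with no terminals keeps every degree parity unchanged (the removed components contain neither $s$ nor $t$), so the output is a feasible degree-relaxed solution. It remains to justify the dichotomy about the terminal components of $F$. This holds because $F_0$ is connected and spans $X$, and deleting the single edge $e_0=st$ (with all its copies) can split the terminal-containing part into at most the two sides containing $s$ and $t$.

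I expect no real obstacle, since the heavy lifting is already in \Cref{lem:bdstspp-2-8-join-integral}. The only point needing care is the connectivity argument after removing $e_0$, together with the observation that $F$ (not $F^m$) is used here, so the parity structure on $A$ is intact.
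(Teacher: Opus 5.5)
Your proposal is correct and follows essentially the same route as the paper. The paper's proof also obtains feasibility of $\bar x$ from the argument in Theorem~\ref{thm:bd-subset-path-11-5-6}, bounds $c(F)\le c(x^*)\le\OPT$, uses \Cref{lem:bdstspp-2-8-join-integral} to get $c(J)\le c(y)=\frac12c(x^*)+\frac12c(P)$, reads the degree bound off $r_v$, and merges the $s$- and $t$-components by parity; your explicit justification that deleting $e_0$ leaves at most two terminal-containing components fills in a step the paper only asserts by analogy.
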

\addtocounter{theorem}{-1}
\endgroup

\begin{proof}
The same argument as in the proof of Theorem~\ref{thm:bd-subset-path-11-5-6} shows that \(\bar x\) is a feasible solution to~\eqref{eq:bdst} for \(G_0\) with the set of terminals \(X\) and degree bounds \(\{B_v\}_{v \in V}\), and therefore
\[
c(F)=c(F_0)\le 2c(\bar x)=c(x^*)\le\OPT.
\]
\Cref{lem:bdstspp-2-8-join-integral} implies that the \(T\)-join \(J\) found by the algorithm satisfies \(c(J)\le c(y)\) and \(\deg_J(v)\le r_v\) for all \(v\in V\).
Thus,
\[
c(H)\le c(F)+c(J)\le c(F)+c(y)\le \OPT+\frac12c(x^*)+\frac12c(P)\le2\OPT.
\]

The degree bound follows from the definition of \(r_v\): $\deg_H(v)\le \deg_F(v)+\deg_J(v)\le \deg_F(v)+r_v=b_v+8$.

Again, the same argument as in the proof of Theorem~\ref{thm:bd-subset-path-11-5-6} shows that the algorithm outputs a feasible solution, completing the proof.
\end{proof}

\end{document}